\begin{document}

\markboth{M. Borges et al.}
{Computing coset leaders and leader codewords of binary codes}

%
\catchline{}{}{}{}{}
%

\title{COMPUTING COSET LEADERS AND LEADER CODEWORDS OF BINARY CODES}

\author{M. BORGES-QUINTANA\thanks{Partially funded by  RISC-Linz  DK-Doctoral Program.} \, \and M.A.~BORGES-TRENARD}

\address{Department of Mathematics, Faculty of Mathematics and Computer
Science\\  
 Universidad de Oriente, Santiago de Cuba, Cuba\\ 
\email{{\{mijail,mborges\}@csd.uo.edu.cu}}}

\author{I.~M\'ARQUEZ-CORBELLA}
\address{GRACE Project, INRIA Saclay \& LIX, CNRS UMR 7161 - \'Ecole Polytechnique, 91120 Palaiseau Cedex, France.\\
\email{irene.marquez-corbella@inria.fr}}

\author{E.~MART\'INEZ-MORO\thanks{Third and fourth authors are
partially supported by Spanish MCINN under project MTM2012-36917-C03-02.}}
\address{Institute of Mathematics IMUVa, University of Valladolid\\
Valladolid, Castilla, Spain\\
\email{edgar@maf.uva.es}}

\maketitle

\begin{history}
\received{(Day Month Year)}
\revised{(Day Month Year)}
\comby{[editor]}
\end{history}

\begin{abstract}

In this paper we use the Gr\"obner representation of a binary linear code $\mathcal C$ to give efficient algorithms for computing the whole set of coset leaders, denoted by $\mathrm{CL}(\mathcal C)$ {and the set of leader codewords, denoted by $\mathrm L(\mathcal C)$}. The first algorithm could be adapted to provide not only the Newton and the covering radius of $\mathcal C$ but also to determine the coset leader weight distribution. { Moreover, providing the set of leader codewords we have a test-set for decoding by a gradient-like decoding algorithm. Another contribution of this article is the relation stablished between zero neighbours and leader codewords.}
\end{abstract}

\keywords{Binary codes; Coset leaders; Test set; 
Gr\"obner
representation.}

{\small AMS Mathematics Subject Classification: 94B05, 13P10}

\section{Introduction}

The {first} goal of this article 
 is to discuss a general algorithm that produces an ordered list of the whole set of coset leaders, denoted by $\mathrm{CL}(\mathcal C)$, of a given binary code $\mathcal C$. This algorithm explains the procedure in \cite[\S 11.7]{huffman:2003} in a more transparent way and it can be adapted to determine coset leader weight distribution, the Newton radius and the covering radius which is one of the most important and studied parameters of a linear code. In \cite{raddum:2004}
some general bounds on the Newton radius for binary linear codes are given.
Finding the distribution $\left( \alpha_0, \ldots, \alpha_n\right)$ of cosets leaders (WDCL) for a code $\mathcal C$ is a classic problem in Coding Theory, see for instance \cite[Chapter 1, Section 5]{macwilliams:1977}. 
This problem is still unsolved for many family of linear codes even for first-order Reed-Muller codes (see \cite{kurshan:1972}).

{Our principal contributions are the efficient computation of the set of all coset
leaders and the definition and computation of the set of leader codewords (which is a subset of the set of zero-neighbours), by taking advantage of the additive structure of the cosets. In addition, we proved some properties of this set of leader codewords.} Note that the structure described in this paper is related  to the monotone structure of the sets of correctable
and uncorrectable errors also introduced in \cite{helleseth:2005}, where they describe the minimal uncorrectable errors under the ordering and the so-called larger halves of minimal codewords. Moreover, they use this description to give a gradient-like decoding algorithm. The same approach is considered in \cite{yasunaga:2010}.
Note that, if the decoding is done using \emph{minimum distance decoding}, a decoding failure occurs if and only if the true error is not a coset leader. Also solving the  \emph{$t$-bounded distance decoding problem} for a general linear code is related with the knowledge of the coset leaders of the code. 
Finally  the set of coset leaders in linear codes has been also related to the set of minimal support codewords which have been used in maximum likelihood decoding analysis  \cite{barg:1998, marquez:2011} and in secret sharing schemes since they describe the minimal access structure \cite{massey1993}. 

All these problems related to the one that concerns this paper are all considered to be hard computational problems (see for instance \cite{barg:1998,berlekamp:1978}) even if preprocessing is allowed \cite{bruck:1990}.

\paragraph{Outline of the paper:} 
In Section \ref{Section2} we have compiled some basic facts on coding theory and the Gr\"obner representation of binary linear codes. As for prerequisites, the reader is expected to be familiar with these topics. However we will touch only a few aspect of the theory of Gr\"obner bases since the paper is written in a ``Gr\"obner bases''-free context. For a deeper discussion of Gr\"obner representation for linear codes we refer the reader to \cite{borges:2007a} where recent results and some applications are indicated, in order to get a general picture on the subject we recommend \cite{borges:2007b}.

{In Section \ref{Section3} we provide an algorithm to compute the set of all coset leaders ($\mathrm{CL}(\mathcal C)$). A similar algorithm for computing the set of all coset leaders for a binary code follows intuitively from \cite[Chapter 11]{huffman:2003}. However, the algorithm proposed in this section do not only provide the set $\mathrm{CL}(\mathcal C)$, but also a Gr\"obner representation of $\mathcal C$ which allows the description of a complete decoding algorithm for $\mathcal C$. Moreover, this algorithm is crucial in order to derive an algorithm for the computation of the set of leader codewords.} 
The example presented at the end of this section of a binary linear code with $64$ cosets and $118$ coset leaders suggests extra applications of the algorithm such as how to obtain the weight distribution of the coset leaders or the Newton and Covering radius of a code. These applications do not pose a large additional cost to the proposed algorithm.

Section \ref{Section4} is devoted to show how the previous algorithm can be adapted to compute a test set for the code which we refer to as \emph{leader codewords}. Not only do we prove that they are zero neighbours but also that the knowledge of the set of leader codewords can be used to compute all coset leaders corresponding to a given received word.

In the final section  we point out where to find out some implementations of the algorithms presented in this paper.

\section{Preliminaries}
\label{Section2}

By $\mathbb Z$, $\mathbb K$, $\mathbb K[\mathbf X]$ and $\mathbb F_q$ we denote the ring of integers, an arbitrary finite field, the polynomial ring in $n$ variables over the field $\mathbb K$ and the finite field with $q$ elements.

A \emph{linear code} $\mathcal C$ over $\mathbb F_2$ of length $n$ and dimension $k$, or an $[n,k]$ binary code for short, is a $k$-dimensional subspace of $\mathbb F_2^n$. We will call the vectors $\mathbf v$ in $\mathbb F_2^n$ words and the particular case where $\mathbf v\in \mathcal C$, codewords. For every vector $\mathbf y\in \mathbb F_2^n$ its \emph{support} is define as its support as a vector in $\mathbb F_2^n$, i.e. 
$\mathrm{supp}(\mathbf y) = \left\{ i \mid y_i \neq 0\right\}$ and its \emph{Hamming weight}, denoted by $\mathrm{w}_H(\mathbf y)$ as the cardinality of $\mathrm{supp}(\mathbf y)$.

The \emph{Hamming distance}, $d_H(\mathbf x, \mathbf y)$, between two vectors $\mathbf x, ~\mathbf y \in \mathbb F_2^n$ is the number of places where they differ, or equivalently, 
$d_H(\mathbf x, \mathbf y) = \mathrm{w}_H(\mathbf x - \mathbf y)$. The \emph{minimum distance} $d(\mathcal C)$ of a linear code $\mathcal C$ is defined as the minimum weight among all nonzero codewords. 

Choose a parity check matrix $H$ for $\mathcal C$. The \emph{Syndrome} of a word $\mathbf y\in \mathbb F_2^n$ with respect to the parity check matrix $H$ is the vector $S(\mathbf y) = H\mathbf y^T \in \mathbb F_2^{n-k}$. As the syndrome of a codeword is $\mathbf 0$, then we have a way to test whether the vector belongs to the code. Moreover, there is a one-to-one correspondence between cosets of $\mathcal C$ and values of syndromes.

\begin{definition}
\label{CL-Definition}
The words of minimal Hamming weight in the cosets of $\mathbb F_2^n / \mathcal C$ are the set of coset leaders for $\mathcal C$ in $\mathbb F_2^n$. We will denote by $\mathrm{CL}(\mathcal C)$ the set of coset leaders of the code $\mathcal C$ and by $\mathrm{CL}(\mathbf y)$ the subset of coset leaders corresponding to the coset $\mathcal C+ \mathbf y$. We define the \emph{weight of a coset} as the smallest Hamming weight among all vectors in the coset, or equivalently the weight of one of its leaders.
\end{definition}

The zero vector is the unique coset leader of the code $\mathcal C$. Moreover, every coset of weight at most $t$ has a unique coset leader, where $t= \lfloor \frac{d(\mathcal C)-1}{2}\rfloor$ is the \emph{error-correcting capacity} of $\mathcal C$ and $\lfloor \cdot \rfloor$ denotes the greatest integer function.

For all $r\in \mathbb Z_{\geq 0}$ and $\mathbf v\in \mathbb F_2^n$ the set $\mathrm B (\mathbf v, r) : = \left\{ \mathbf w \in \mathbb F_2^n \mid d_H(\mathbf v, \mathbf w) \leq r\right\}$
is called balls around $\mathbf v$ with radius $r$ respect to the Hamming metric. Note that its cardinality is 
$|\mathrm B(\mathbf v, r)| = \sum_{i=0}^r \binom{n}{i}$.
It is well known that \emph{complete minimum distance decoding} (CDP) over the code $\mathcal C$ has a unique solution for those vectors in the union of the Hamming balls of radius $t$ around the codewords of $\mathcal C$.

From now on $\left\{ \mathbf e_i \mid i = 1, \ldots, n\right\}$ represents the canonical basis of $\mathbb F_2^n$.
The following theorem gives us a nice relationship between the coset leaders.

\begin{theorem}
\label{Theorem1}
Let $\mathbf w \in \mathrm{CL}(\mathcal C)$ such that $\mathbf w = \mathbf y + \mathbf e_i$ for some word $\mathbf y \in \mathbb F_2^n$ and $i\in \mathrm{supp}(\mathbf w)$, then $\mathbf y \in \mathrm{CL}(\mathcal C)$.
\end{theorem}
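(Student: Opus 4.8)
The plan is to argue by contradiction. Suppose $\mathbf w = \mathbf y + \mathbf e_i$ with $i \in \mathrm{supp}(\mathbf w)$ is a coset leader but $\mathbf y \notin \mathrm{CL}(\mathcal C)$. Since $i \in \mathrm{supp}(\mathbf w)$, the $i$-th coordinate of $\mathbf w$ is $1$, hence the $i$-th coordinate of $\mathbf y = \mathbf w + \mathbf e_i$ is $0$, so that $\mathrm{supp}(\mathbf y) = \mathrm{supp}(\mathbf w) \setminus \{i\}$ and consequently $\mathrm{w}_H(\mathbf y) = \mathrm{w}_H(\mathbf w) - 1$. Because $\mathbf y$ is not a coset leader of its own coset $\mathcal C + \mathbf y$, there is some vector $\mathbf y' \in \mathcal C + \mathbf y$ with $\mathrm{w}_H(\mathbf y') < \mathrm{w}_H(\mathbf y) = \mathrm{w}_H(\mathbf w) - 1$, i.e. $\mathrm{w}_H(\mathbf y') \le \mathrm{w}_H(\mathbf w) - 2$.

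Next I would push this back up to the coset of $\mathbf w$. Write $\mathbf y' = \mathbf y + \mathbf c$ for some $\mathbf c \in \mathcal C$, and set $\mathbf w' := \mathbf y' + \mathbf e_i = \mathbf w + \mathbf c$, so $\mathbf w'$ lies in the same coset $\mathcal C + \mathbf w$ as $\mathbf w$. The only issue is to control the weight of $\mathbf w'$: adding $\mathbf e_i$ changes the weight of $\mathbf y'$ by exactly $1$, depending on whether $i \in \mathrm{supp}(\mathbf y')$. In either case $\mathrm{w}_H(\mathbf w') \le \mathrm{w}_H(\mathbf y') + 1 \le \mathrm{w}_H(\mathbf w) - 1 < \mathrm{w}_H(\mathbf w)$. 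This contradicts the assumption that $\mathbf w$ is a coset leader, i.e. has minimal weight in $\mathcal C + \mathbf w$. Hence $\mathbf y \in \mathrm{CL}(\mathcal C)$.

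The argument is essentially a one-line weight count, so there is no serious obstacle; the only point requiring a little care is the bookkeeping on the coordinate $i$ — namely observing that deleting $\mathbf e_i$ from $\mathbf w$ necessarily \emph{decreases} the weight (because $i \in \mathrm{supp}(\mathbf w)$), while adding $\mathbf e_i$ back to an arbitrary representative $\mathbf y'$ increases the weight by \emph{at most} $1$. Keeping these two facts straight is what makes the net weight drop of at least $1$ survive the passage between the two cosets. One could equivalently phrase the whole thing in terms of syndromes using the one-to-one correspondence between cosets and syndrome values, since $S(\mathbf w) = S(\mathbf y) + S(\mathbf e_i)$, but the direct vector manipulation above is cleaner.
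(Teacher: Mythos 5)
Your proof is correct: the weight bookkeeping ($\mathrm{w}_H(\mathbf y)=\mathrm{w}_H(\mathbf w)-1$ because $i\in\mathrm{supp}(\mathbf w)$, and $\mathrm{w}_H(\mathbf y'+\mathbf e_i)\le \mathrm{w}_H(\mathbf y')+1$ for the cheaper representative $\mathbf y'$) is exactly what is needed, and $\mathbf y'+\mathbf e_i$ indeed lies in $\mathcal C+\mathbf w$, so the contradiction with the minimality of $\mathbf w$ is sound. The paper itself gives no argument for this theorem, deferring to Corollary 11.7.7 of Huffman--Pless, and that reference's proof (every subword, i.e.\ descendant, of a coset leader is again a coset leader) rests on the same elementary weight count you perform; so your proposal is essentially the standard argument, with the small bonus of being self-contained rather than a citation.
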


\begin{proof}
See \cite[Corollary 11.7.7]{huffman:2003}. 
\end{proof}

\begin{definition}
\label{Voronoi:Definition}
The \emph{Voronoi region} of a codeword $\mathbf c \in \mathcal C$, denoted by $\mathrm D(\mathbf c)$, is defined as:
$$\mathrm D(\mathbf c) = \left\{ \mathbf y \in \mathbb F_2^n \mid
d_H(\mathbf y, \mathbf c)\leq d_H(\mathbf y, \mathbf c') \hbox{ for all } \mathbf c'\in \mathcal C\setminus \{ \mathbf 0 \} \right\}.$$
\end{definition}

Note that the set of Voronoi regions of a binary code $\mathcal C$ covers the space $\mathbb F_2^n$. However, some points of $\mathbb F_2^n$ may be contained in several regions. Furthermore, the Voronoi region of the all-zero codeword $\mathrm D(\mathbf 0)$ coincides with the set of coset leaders of $\mathcal C$, i.e. $\mathrm D(\mathbf 0) = \mathrm{CL}(\mathcal C)$.

\begin{definition}
A \emph{test-set} $\mathcal T$ for a given binary code $\mathcal C$ is a set of codewords such that every word $\mathbf y$ either lies in the Voronoi region of the all-zero vector, $\mathrm D(\mathbf 0)$, or there exists $\mathbf t\in \mathcal T$ such that $\mathrm w_H(\mathbf y- \mathbf t)< \mathrm w_H(\mathbf y)$.
\end{definition}

We define the following characteristic crossing function:
$\begin{array}{cccc}\blacktriangle: & \mathbb F_2^s & \longrightarrow & \mathbb Z^s\end{array}$ which replace the class of $0,1$ by the same symbols regarded as integers. This map will be used with matrices and vectors acting coordinate-wise.

Let $\mathbf X$ denotes $n$ variables $x_1, \ldots, x_n$ and let $\mathbf a = (a_1, \ldots, a_n)$ be an $n$-tuple of elements of the field $\mathbb F_2$. We will adopt the following notation:
$$\mathbf X^{\mathbf a} := x_1^{\blacktriangle a_1} \cdots x_n^{\blacktriangle a_n} \in \mathbb K[\mathbf X].$$
This relationship enable us to go back to the usual definition of terms in $\mathbb K[\mathbf X]$.

\begin{definition}
\label{GrobnerRepresentation-Definition}
A Gr\"obner representation of an $[n,k]$ binary linear code $\mathcal C$ is a pair $(\mathcal N, \phi)$ where:
\begin{itemize}
\item $\mathcal N$ is a transversal of the cosets in $\mathbb F_2^n / \mathcal C$ (i.e. one element of each coset) verifying that $\mathbf 0 \in \mathcal N$ and for each $\mathbf n\in \mathcal N\setminus \{\mathbf 0\}$ there exists an $\mathbf e_i$ with $i\in \{1, \ldots, n\}$ such that $\mathbf n = \mathbf n'+ \mathbf e_i$ with $\mathbf n'\in \mathcal N$.

\item $\begin{array}{cccc}\phi:& \mathcal N\times \{\mathbf e_i \}_{i=1}^n & \longrightarrow & \mathcal N\end{array}$ is a function called {\rm Matphi function} that maps each pair $(\mathbf n, \mathbf e_i)$ to the element of $\mathcal N$ representing the coset that contains $\mathbf n+ \mathbf e_i$.
\end{itemize}
\end{definition}

The ideal $I(\mathcal C)$ associated with a binary code $\mathcal C$ is
$$I(\mathcal C) = \left\langle 
\mathbf X^{\mathbf w_1} - \mathbf X^{\mathbf w_2} \mid
\mathbf w_1 - \mathbf w_2 \in \mathcal C
\right\rangle \subseteq \mathbb K[\mathbf X].$$
Note that $I(\mathcal C)$ is a zero-dimensional ideal since the quotient ring $R=\mathbb K[\mathbf X]/I(\mathcal C)$ is a finite dimensional vector space (i.e. $\dim_{\mathbb K}\left(R\right)<\infty$). Moreover, its dimension is equal to the number of cosets in $\mathbb F_2^n / \mathcal C$.

Therefore, the word \emph{Gr\"obner} is not casual. 
Indeed, if we consider the binomial ideal $I(\mathcal C)$ and a total degree ordering $\prec$, and we compute the reduced Gr\"obner basis $\mathcal G$ of $I(\mathcal C)$ w.r.t. $\prec$. Then we can take $\mathcal N$ as the vectors $\mathbf w$ such that $\mathbf X^{\mathbf w}$ is a standard monomial module $\mathcal G$. Moreover, the function \emph{Matphi} can be seen as the multiplication tables of the standard monomials times the variables $x_i$ modulo the ideal $I_2(\mathcal C)$. Note that the \emph{Matphi} structure is independent of the particular chosen set $\mathcal N$ of representative elements of the quotient ring $\mathbb F_2^n / \mathcal C$. See \cite{borges:2007b, borges:2007a} for a more general treatment of these concepts.

\section{Computing the set of coset leaders}
\label{Section3}

\begin{definition}
\label{Weight-Ordering}
An ordering $\prec$ on $\mathbb F_2^n$ is a \emph{weight compatible ordering} if for any vectors $\mathbf a, ~\mathbf b \in \mathbb F_2^n$ we say  $\mathbf a \prec \mathbf b$ if 
$$\begin{array}{ccc}
\mathrm{w}_H(\mathbf a) < \mathrm{w}_H(\mathbf b) & \hbox{, or if, } &
\mathrm{w}_H(\mathbf a) = \mathrm{w}_H(\mathbf b) \hbox{ and } \blacktriangle \mathbf a \prec_1 \blacktriangle \mathbf b
\end{array}$$
where $\prec_1$ is any admissible order on $\mathbb N^n$, i.e. we will require that $\prec_1$ have the following additional properties:
\begin{enumerate}
\item For any vector $\mathbf u\in \mathbb N^n \setminus \{\mathbf 0 \}$, $\mathbf 0 \prec_1 \mathbf u$ and,
\item For any vectors $\mathbf u, \mathbf v, \mathbf w \in \mathbb N^n$, if $\mathbf u \prec_1 \mathbf v$, then $\mathbf u + \mathbf w \prec_1 \mathbf v + \mathbf w$.
\end{enumerate}
\end{definition}

Note that a weight compatible ordering $\prec$ is in general not an admissible ordering on $\mathbb F_2^n$. However, a weight compatible ordering $\prec$ on $\mathbb F_2^n$ satisfies:
\begin{itemize}
\item $\prec$ is a \emph{noetherian-ordering} since every strictly decreasing sequence in $\mathbb F_2^n$ eventually terminates (due to the finiteness of the set $\mathbb F_2^n$).
\item for every pair $\mathbf a, \mathbf b\in \mathbb F_2^n$, if $\mathrm{supp}(\mathbf a)\subset \mathrm{supp}(\mathbf b)$, then $\mathbf a \prec \mathbf b$.
\end{itemize}
Moreover, for every vector $\mathbf a\in \mathbb F_2^n$ we have that $\deg\left(\mathbf X^{\mathbf a}\right) = \mathrm w_H(\mathbf a)$, that is, a weight compatible ordering on $\mathbb F_2^n$ can be viewed as a total degree ordering on $\mathbb K[\mathbf X]$.

\begin{definition}
\label{List-Definition}
We define the object $\tt List$ is an ordered set of elements in $\mathbb F_2^n$ w.r.t. a weight compatible order $\prec$ verifying the following properties:
\begin{enumerate}
\item $\mathbf 0\in \tt{List}$.
\item If $\mathbf v \in \tt{List}$ and $\mathrm{w}_H(\mathbf v) = \mathrm{w}_H\left(N(\mathbf v)\right)$ then $\left\{ \mathbf v + \mathbf e_i \mid i \notin \mathrm{supp}(\mathbf v)\right\} \subset \tt{List}$, where
$N(\mathbf v) = \min_{\prec}\left\{ \mathbf w \mid \mathbf w \in \tt{List}\cap \left( \mathcal C+\mathbf v\right)\right\}$.
\end{enumerate}
We denote by $\mathcal N$ the set of distinct $N(\mathbf v)$ with $\mathbf v\in \tt{List}$.
\end{definition}

\begin{remark}
\label{Extra-Remark}
Observe that if the second condition of Definition \ref{List-Definition} holds for $\mathbf v \in \mathbb F_2^n$ then $\mathbf v \in \mathrm{CL}(\mathcal C)$. 
In particular, when $\mathbf v$ is the first element of ${\tt List}$ that belongs to $\mathcal C+ \mathbf v$, then $N(\mathbf v) = \mathbf v$.
\end{remark}

Next theorem states that the object $\tt{List}$ includes the set of coset leaders of a given binary linear code.

\begin{theorem}
\label{Theorem2}
Let $\mathbf w\in \mathbb F_2^n$. If $\mathbf w \in \mathrm{CL}(\mathcal C)$ then $\mathbf w \in \tt{List}$.
\end{theorem}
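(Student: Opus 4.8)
The plan is to induct on the Hamming weight of $\mathbf w$, using Theorem \ref{Theorem1} to peel off one coordinate at a time. The base case $\mathrm w_H(\mathbf w)=0$ is immediate: then $\mathbf w=\mathbf 0$, which lies in $\tt{List}$ by the first defining property in Definition \ref{List-Definition}.

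For the inductive step, assume $\mathbf w\in\mathrm{CL}(\mathcal C)$ with $\mathrm w_H(\mathbf w)=m\geq 1$, and that every coset leader of weight strictly less than $m$ belongs to $\tt{List}$. Since $\mathbf w\neq\mathbf 0$ I can pick $i\in\mathrm{supp}(\mathbf w)$ and set $\mathbf y:=\mathbf w+\mathbf e_i$, so that $\mathbf w=\mathbf y+\mathbf e_i$, $i\notin\mathrm{supp}(\mathbf y)$, and $\mathrm w_H(\mathbf y)=m-1$ (all this is routine bookkeeping over $\mathbb F_2$). Theorem \ref{Theorem1} gives $\mathbf y\in\mathrm{CL}(\mathcal C)$, and the induction hypothesis then yields $\mathbf y\in\tt{List}$.

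The crux is to verify that the hypothesis of the second property of Definition \ref{List-Definition} holds for $\mathbf y$, namely $\mathrm w_H(\mathbf y)=\mathrm w_H\!\left(N(\mathbf y)\right)$. One inequality comes from the fact that $\mathbf y$ itself belongs to $\tt{List}\cap(\mathcal C+\mathbf y)$, so by definition of $N(\mathbf y)$ we have $N(\mathbf y)\preceq\mathbf y$, hence $\mathrm w_H\!\left(N(\mathbf y)\right)\leq\mathrm w_H(\mathbf y)$ because $\prec$ is weight compatible. The reverse inequality comes from $N(\mathbf y)$ lying in the same coset $\mathcal C+\mathbf y$ as the coset leader $\mathbf y$, whence $N(\mathbf y)$ cannot have smaller weight. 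Once this equality is established, the second property of Definition \ref{List-Definition} gives $\left\{\mathbf y+\mathbf e_j\mid j\notin\mathrm{supp}(\mathbf y)\right\}\subset\tt{List}$, and since $i\notin\mathrm{supp}(\mathbf y)$ we conclude $\mathbf w=\mathbf y+\mathbf e_i\in\tt{List}$, completing the induction.

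I expect the only delicate point to be the weight identity $\mathrm w_H(\mathbf y)=\mathrm w_H\!\left(N(\mathbf y)\right)$: it is precisely here that one uses both the minimality defining coset leaders and the weight compatibility of $\prec$, and this is what makes the inductive closure in Definition \ref{List-Definition} propagate through exactly the coset leaders rather than through arbitrary words. (Implicitly one also uses that $\prec$ is a total order on $\mathbb F_2^n$, so that $N(\mathbf y)$ is well defined as a minimum of the nonempty finite set $\tt{List}\cap(\mathcal C+\mathbf y)$.) Everything else — the reduction $\mathbf w=\mathbf y+\mathbf e_i$ with $i\notin\mathrm{supp}(\mathbf y)$ and the appeal to Theorem \ref{Theorem1} — is straightforward.
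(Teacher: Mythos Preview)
Your proof is correct and follows essentially the same approach as the paper: induct, peel off one coordinate via Theorem~\ref{Theorem1} to land on a smaller coset leader, invoke the induction hypothesis, and then apply property~2 of Definition~\ref{List-Definition}. The only cosmetic difference is that the paper inducts along the well-ordering $\prec$ while you induct on Hamming weight (which suffices since $\mathrm w_H(\mathbf y)<\mathrm w_H(\mathbf w)$), and you spell out the verification $\mathrm w_H(\mathbf y)=\mathrm w_H(N(\mathbf y))$ more carefully than the paper does.
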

\begin{proof}
We will proceed by induction on $\mathbb F_2^n$ with a weight compatible ordering $\prec$.

By definition, the statement is true for $\mathbf 0 \in \mathbb F_2^n$. Now for the inductive step, we assume that the desired property is true for any word $\mathbf u \in \mathrm{CL}(\mathcal C)$ smaller than an arbitrary but fixed $\mathbf w\in \mathrm{CL}(\mathcal C)\setminus \{ \mathbf 0\}$ w.r.t.  $\prec$, i.e.
$$\hbox{if } \mathbf u \in \mathrm{CL}(\mathcal C) \hbox{ and } \mathbf u \prec \mathbf w \hbox{ then }\mathbf u \in \tt{List},$$
and show that this implies that $\mathbf w \in \tt{List}$.

First note that $\mathbf w$ can be written as $\mathbf w = \mathbf v + \mathbf e_i$ with $i \in \mathrm{supp}(\mathbf w)$ and $i \notin \mathrm{supp}(\mathbf v)$, or equivalently $\mathrm{supp}(\mathbf v)\subset \mathrm{supp}(\mathbf w)$, i.e. $\mathbf v \prec \mathbf w$. Moreover, since $\mathbf w \in \mathrm{CL}(\mathcal C)$, then by Theorem \ref{Theorem1} $\mathbf v$ also belongs to $\mathrm{CL}(\mathcal C)$, thus $\mathrm w_H(\mathbf v) = \mathrm w_H\left(N(\mathbf v)\right)$. So, if we invoke the induction hypothesis we have that $\mathbf v\in \tt{List}$. We now apply property $2$ of Definition \ref{List-Definition} which gives as claimed, that $\mathbf w = \mathbf v + \mathbf e_i \in \tt{List}$.
\end{proof}

Theorem \ref{Theorem2} and its proof suggest Algorithm \ref{Algorithm::1} for computing the whole set of coset leaders of a given binary code $\mathcal C$.

\begin{algorithm2e}[!h]
\KwData{A weight compatible ordering $\prec$ and a parity check matrix $H$ of a binary code $\mathcal C$.} 
\KwResult{The set of coset leaders $\mathrm{CL}(\mathcal C)$ and $(\mathcal N, \phi)$ a Gr\"obner representation for $\mathcal C$.}
${\tt Listing} \longleftarrow [0]$;
$\mathcal N \longleftarrow \emptyset$;
$r \longleftarrow 0$;
$\mathrm{CL}(\mathcal C) \longleftarrow \emptyset$;
$\mathcal S \longleftarrow \emptyset$\;
\While{${\tt Listing} \neq \emptyset$}
{
	$\mathbf t \longleftarrow {\tt NextTerm}[{\tt Listing}]$\;
	$\mathbf s \longleftarrow \mathbf t H^T$\;
	$j \longleftarrow {\tt Member}[s,\mathcal S]$\;
	\eIf{$j \neq {\tt false}$}
	{
		\For{$k\in \mathrm{supp}(\mathbf t) ~:~\mathbf t = \mathbf t'+\mathbf e_k$ with $\mathbf t'\in \mathcal N$}{
			$\phi (\mathbf t', \mathbf e_k) \longleftarrow \mathbf t$
		}
		\If{$\mathrm{w}_H(\mathbf t) = \mathrm{w}_H(\mathbf t_j)$}
		{
		$\mathrm{CL}(\mathcal C)[j] \longleftarrow \mathrm{CL}(\mathcal C)[j] \cup \{ \mathbf t\}$\;
		${\tt Listing} \longleftarrow {\tt InsertNext}[\mathbf t, {\tt Listing}]$\;
		}
	}
	{ 
		$r\longleftarrow r+1$;  
		$\mathbf t_r \longleftarrow \mathbf t$;
		$\mathcal N \longleftarrow \mathcal N \cup \{ \mathbf t_r\}$\;
		$\mathrm{CL}(\mathcal C)[r] \longleftarrow \{ \mathbf t_r\}$;
		$\mathcal S[r] \longleftarrow  \mathbf s$\;
		${\tt Listing} = {\tt InsertNext}[\mathbf t_r, {\tt Listing}]$\;
		\For{$k\in \mathrm{supp}(\mathbf t_r) ~:~\mathbf t_r = \mathbf t'+ \mathbf e_k$ with $\mathbf t'\in \mathcal N$}
		{
			$\phi(\mathbf t', \mathbf e_k) \longleftarrow \mathbf t_r$\;
			$\phi(\mathbf t_r, \mathbf e_k) \longleftarrow \mathbf t'$\;
		}
	}
} 
\caption{Computation of $\mathrm{CL}(\mathcal C)$}
\label{Algorithm::1}
\end{algorithm2e}

The subfunctions used in Algorithm \ref{Algorithm::1} are:
\begin{itemize}
\item ${\tt InsertNext}[\mathbf t, {\tt Listing}]$, adds to $\tt Listing$ all the sums $\mathbf t + \mathbf e_k$ with $k \notin \mathrm{supp}(\mathbf t)$, removes duplicates and keeps $\tt Listing$ in increasing order w.r.t. the ordering $\prec$.
\item ${\tt NextTerm}[{\tt Listing}]$, returns the first element from $\tt Listing$ and deletes it. If $\tt Listing$ is empty returns $\emptyset$.
\item ${\tt Member}[\hbox{obj}, G]$, returns the position $j$ of $\hbox{obj}$ in $G$ if $\hbox{obj} \in G$ and $\tt false$ otherwise.
\end{itemize}

\begin{remark}
\label{Remark1}
In Algorithm \ref{Algorithm::1}, first we perform subroutine $\mathbf t = \tt{NextTerm}[{\tt Listing}]$ where the element $\mathbf t$ is deleted from the set $\tt Listing$. Then subroutine ${\tt InsertNext}[\mathbf t, {\tt Listing}]$ is carried out which inserts in $\tt Listing$ all the elements of the form:
$$\mathbf t' = \mathbf t + \mathbf e_k \hbox{ with } k\notin \mathrm{supp}(\mathbf t)\hbox{, i.e. } \mathbf t'\succ \mathbf t.$$
Therefore all the new elements inserted in $\tt Listing$ are greater than those that have already been deleted from it with respect to $\prec$.
\end{remark}

\begin{theorem}
\label{Theorem3}
Algorithm \ref{Algorithm::1} computes the set of coset leaders of a given binary code $\mathcal C$ and its corresponding {\rm Matphi} function.
\end{theorem}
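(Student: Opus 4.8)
The plan is to show two things: first, that the algorithm terminates, and second, that upon termination the set $\mathcal{N}$ it produces together with the computed $\phi$ form a Gröbner representation, and that $\mathrm{CL}(\mathcal{C})$ stores exactly the coset leaders. The key observation — made precise in Remark \ref{Remark1} — is that every element inserted into $\tt Listing$ via $\tt InsertNext$ is strictly larger (w.r.t.\ $\prec$) than the element $\mathbf{t}$ just extracted. Since $\prec$ is a noetherian ordering on the finite set $\mathbb{F}_2^n$, this guarantees termination: only finitely many distinct elements can ever enter $\tt Listing$, and each is processed once.

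Next I would argue that the set of elements ever extracted by $\tt NextTerm$ coincides with the object $\tt List$ of Definition \ref{List-Definition}. One inclusion is immediate by construction: $\mathbf{0}$ is the initial element, and whenever an extracted $\mathbf{t}$ satisfies $\mathrm{w}_H(\mathbf{t}) = \mathrm{w}_H(N(\mathbf{t}))$ — which, tracing the algorithm, happens precisely when either the syndrome $\mathbf{s}$ is new (the $\tt else$ branch, where $\mathbf{t}$ becomes a new coset representative $\mathbf{t}_r$) or $\mathbf{s}$ was already seen but $\mathrm{w}_H(\mathbf{t}) = \mathrm{w}_H(\mathbf{t}_j)$ — the call to $\tt InsertNext$ adds exactly $\{\mathbf{t} + \mathbf{e}_i \mid i \notin \mathrm{supp}(\mathbf{t})\}$, matching property 2. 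Conversely, because $\tt Listing$ is processed in increasing order, the first element of any coset $\mathcal{C} + \mathbf{v}$ to be extracted is the $\prec$-minimum among $\tt List \cap (\mathcal{C} + \mathbf{v})$, so the stored $\mathbf{t}_r = N(\mathbf{v})$ and the weight test in the algorithm faithfully implements the condition $\mathrm{w}_H(\mathbf{v}) = \mathrm{w}_H(N(\mathbf{v}))$. Having identified the extracted set with $\tt List$, Theorem \ref{Theorem2} gives $\mathrm{CL}(\mathcal{C}) \subseteq \tt List$, and Remark \ref{Extra-Remark} (together with the weight test guarding insertion into $\mathrm{CL}(\mathcal{C})[j]$) gives the reverse inclusion at the level of what actually gets stored; so $\bigcup_j \mathrm{CL}(\mathcal{C})[j] = \mathrm{CL}(\mathcal{C})$, partitioned by coset.

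It then remains to check that $(\mathcal{N}, \phi)$ is a Gröbner representation in the sense of Definition \ref{GrobnerRepresentation-Definition}. That $\mathcal{N}$ is a transversal containing $\mathbf{0}$ is clear, since a new element is appended to $\mathcal{N}$ exactly once per new syndrome value. The decomposition requirement — each $\mathbf{n} \in \mathcal{N}\setminus\{\mathbf{0}\}$ is $\mathbf{n}' + \mathbf{e}_i$ with $\mathbf{n}' \in \mathcal{N}$ — follows because $\mathbf{t}_r$ was itself produced by an earlier $\tt InsertNext$ applied to some previously extracted element $\mathbf{t}'$; one must check that this $\mathbf{t}'$ is in fact in $\mathcal{N}$, which holds because $\mathbf{t}' \prec \mathbf{t}_r$ and $\mathbf{t}'$ is a coset leader (by Theorem \ref{Theorem1}, since $\mathrm{supp}(\mathbf{t}') \subset \mathrm{supp}(\mathbf{t}_r)$ and $\mathbf{t}_r$ has minimal weight in its coset), hence was extracted and recorded. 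Finally, for the $\tt Matphi$ function: for each newly extracted $\mathbf{t}$ and each $k \in \mathrm{supp}(\mathbf{t})$ with $\mathbf{t} = \mathbf{t}' + \mathbf{e}_k$, $\mathbf{t}'\in\mathcal{N}$, the algorithm sets $\phi(\mathbf{t}', \mathbf{e}_k) \leftarrow \mathbf{t}_r$ where $\mathbf{t}_r$ is the representative of the coset of $\mathbf{t} = \mathbf{t}' + \mathbf{e}_k$, which is exactly the defining property of $\phi$; and when $\mathbf{t}$ is itself a representative the symmetric assignment $\phi(\mathbf{t}_r, \mathbf{e}_k) \leftarrow \mathbf{t}'$ is also correct since $\mathbf{t}_r + \mathbf{e}_k$ and $\mathbf{t}'$ lie in the same coset.

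The main obstacle I anticipate is the completeness of $\phi$ — showing that \emph{every} pair $(\mathbf{n}, \mathbf{e}_i) \in \mathcal{N} \times \{\mathbf{e}_i\}_{i=1}^n$ eventually receives a value. For $i \notin \mathrm{supp}(\mathbf{n})$ this is handled when $\mathbf{n}$ is extracted and $\tt InsertNext$ pushes $\mathbf{n} + \mathbf{e}_i$ into $\tt Listing$, which is later extracted (it is a genuine coset representative candidate and gets processed); the assignment $\phi(\mathbf{n}, \mathbf{e}_i)$ occurs in the $\tt for$ loop at that point since $i \in \mathrm{supp}(\mathbf{n}+\mathbf{e}_i)$. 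For $i \in \mathrm{supp}(\mathbf{n})$ the assignment is the symmetric one made when $\mathbf{n} = \mathbf{t}_r$ was created, provided $\mathbf{n} - \mathbf{e}_i \in \mathcal{N}$; one needs the decomposition property just established to see that at least one such $i$ works, and then a short argument using Theorem \ref{Theorem1} that in fact $\mathbf{n} - \mathbf{e}_i$ is a coset leader (hence in $\mathcal{N}$) for \emph{every} $i \in \mathrm{supp}(\mathbf{n})$. Pinning down this coverage argument — and verifying that no pair is assigned an inconsistent value by two different branches — is where the proof requires the most care; everything else is bookkeeping against the definitions.
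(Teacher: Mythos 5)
Your treatment of termination, of the identification of the extracted elements with the object \texttt{List} of Definition \ref{List-Definition}, and of the conclusion that the stored sets $\mathrm{CL}(\mathcal C)[j]$ are exactly the coset leaders (via Theorem \ref{Theorem2}, Remark \ref{Extra-Remark} and the weight test) is essentially the paper's own proof and is sound. Where you diverge is the Matphi half: the paper's proof only points at Steps 8, 19 and 20 and, in the remark that follows, defers the verification that $(\mathcal N,\phi)$ fulfils Definition \ref{GrobnerRepresentation-Definition} to \cite{borges:2007a}; you instead try to verify it directly, and that is where your argument has a genuine gap --- one you partly flag yourself in the last paragraph.

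The gap is the step, used twice (in the decomposition property of $\mathcal N$: ``$\mathbf t'$ is a coset leader \ldots\ hence was extracted and recorded'', and in the coverage of $\phi$: ``$\mathbf n-\mathbf e_i$ is a coset leader (hence in $\mathcal N$)''), where you pass from membership in $\mathrm{CL}(\mathcal C)$ to membership in $\mathcal N$. That implication is false in general: $\mathcal N$ is a transversal containing only the $\prec$-smallest leader of each coset, and cosets typically have several leaders (see Table \ref{Table::1}), so Theorem \ref{Theorem1} alone gives neither that Step 20 fires for every $i\in\mathrm{supp}(\mathbf n)$ nor the existence clause of Definition \ref{GrobnerRepresentation-Definition}. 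What is actually needed is that $\mathcal N$, i.e.\ the set of $\prec$-minimal elements of the cosets, is closed under deleting a coordinate of the support: if $\mathbf n\in\mathcal N$ and $i\in\mathrm{supp}(\mathbf n)$, then $\mathbf n+\mathbf e_i\in\mathcal N$. This is not a consequence of Theorem \ref{Theorem1}; it uses the weight compatibility of $\prec$ and the translation invariance of $\prec_1$: if some $\mathbf m$ in the coset of $\mathbf n+\mathbf e_i$ satisfied $\mathbf m\prec\mathbf n+\mathbf e_i$, then $\mathbf m+\mathbf e_i$ lies in the coset of $\mathbf n$ and satisfies $\mathbf m+\mathbf e_i\prec\mathbf n$ (by the weight comparison when $i\in\mathrm{supp}(\mathbf m)$ or when the weights differ, and by property (2) of $\prec_1$ when $i\notin\mathrm{supp}(\mathbf m)$ and the weights tie), contradicting the minimality of $\mathbf n$ in its coset. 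With this lemma your coverage argument for the pairs $(\mathbf n,\mathbf e_i)$ with $i\in\mathrm{supp}(\mathbf n)$ and the decomposition property both go through; without it (or without falling back, as the paper does, on the cited correctness result) the Matphi part of the statement is not established.
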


\begin{proof}
We build the set $\tt{List}$ formed by all the words inserted in the object $\tt{Listing}$ during Algorithm \ref{Algorithm::1}. Let us first prove that this new set is well defined according to Definition \ref{List-Definition}. 
By \textbf{Step 1}, $\mathbf 0 \in {\tt List}$ verifying property $1$ of Definition \ref{List-Definition}. In \textbf{Step 4} the syndrome of $\mathbf t = {\tt NextTerm}[{\tt Listing}]$ is computed, then we have two possible cases based on the outcome of \textbf{Step 5}:
\begin{enumerate}
\item If $j = {\tt false}$ then the coset $\mathcal C+{\mathbf t}$ has not yet been considered. Thus, according to Remark \ref{Extra-Remark}, we have that $N({\mathbf t}) = {\mathbf t}$ and \textbf{Step 17} guarantees property $2$ of Definition \ref{List-Definition}.
\item On the other hand, if $j \neq {\tt false}$, then the element $N({\mathbf t}) = \mathbf t_j$ has already been computed. However, if ${\mathbf t}\in \mathrm{CL}(\mathcal C)$, or equivalently, $\mathrm{w}_H({\mathbf t}) = \mathrm{w}_H(\mathbf t_j)$, then \textbf{Step 12} certified property $2$.
\end{enumerate}
Therefore Algorithm \ref{Algorithm::1} construct the object ${\tt List}$ in accordance with Definition \ref{List-Definition}.

Furthermore, on one hand \textbf{Step 11} and \textbf{ Step 16} assure the computation of the complete set of coset leaders of the given code; and on the other hand \textbf{Step 8}, \textbf{Step 19} and \textbf{Step 20} compute the \emph{Matphi} function. Note that \textbf{Step 18} is necessary since the first case above ensures that $N(\mathbf t_r) = \mathbf t_r$ so by Theorem \ref{Theorem1} 
$\mathbf t' \in \mathrm{CL}(\mathcal C)$. But $\mathbf t_r = \mathbf t' + \mathbf e_k$ with $k \in \mathrm{supp}(\mathbf t_r)$ so by Remark \ref{Remark1} $\mathbf t'\prec \mathbf t_r$ has already been considered on the algorithm.
Thus, we have actually proved that Algorithm \ref{Algorithm::1} guarantees the desired outputs.

Finally, notice that the cardinality of the set {\tt List} is bounded by $n$ times the cardinality of $\mathrm{CL}(\mathcal C)$ and \textbf{Step 12} and \textbf{Step 17} guarantee that when the complete set of coset leaders is computed no more elements are inserted in ${\tt Listing}$ while \textbf{Step 3} continues deleting elements from it. Thus after a finite number of steps the set $\tt Listing$ get empty. Consequently, \textbf{Step 2} give the end of the algorithm.
\end{proof}

\begin{remark}
Note that Algorithm \ref{Algorithm::1} returns $(\mathcal N, \phi)$ that fulfill Definition \ref{GrobnerRepresentation-Definition}, for correctness we refer the reader to \cite[Theorem 1]{borges:2007a}. Furthermore, by definition, those representative of the cosets given by $\mathcal N$ are the smallest terms in $\tt List$ w.r.t. $\prec$.   
\end{remark}

%

\begin{remark}
{
Algorithm \ref{Algorithm::1} has some similarities with the approach that can be deduced from \cite[\S 11.7]{huffman:2003} for computing the whole set of coset leaders in a binary code.} { First of all, Algorithm \ref{Algorithm::1} explains the algorithm in \cite[\S 11.7]{huffman:2003} in a more transparent way. 
For more details, let us consider the partial order defined by }{
 $\mathbf x \leq \mathbf y$ if 
$\mathrm{supp}(\mathbf x)\subseteq \mathrm{supp}(\mathbf y)$ where $\mathbf x,\, \mathbf y$ are two elements in $\mathbb F_2^n$.
One can use this partial order on $\mathbb F_2^n$ to define a partial order on the set of cosets of a binary code $\mathcal C$ as follows: let $\mathbf y_1 + \mathcal C$ and $\mathbf y_2 + \mathcal C$ be two different cosets of $\mathcal C$ then
$$\mathbf y_1 + \mathcal C \leq \mathbf y_2 + \mathcal C ~\Longleftrightarrow~ \exists\, \mathbf x_1 \in \mathrm{CL}(\mathbf y_1) \hbox{ and } \exists \,\mathbf x_2 \in \mathrm{CL}(\mathbf y_2) \hbox{ such that } \mathbf x_1 \leq \mathbf x_2.$$

It is shown in \cite{huffman:2003} that we can order the different cosets of a binary code $\mathcal C$ as a tree from with root $\mathcal C$ and in each edge of the tree one unit is added to the weight of the coset leader with respect to the weight of its} { descendants.}

{Therefore, both algorithms coincide in the incremental weight order applied to provide the set $\mathrm{CL}(\mathcal C)$ but our approach has the following advantages:

\begin{enumerate}
\item Algorithm \ref{Algorithm::1} also returns the additive table $\phi$ associated to the addition of a unit vector to any coset. This tool is fundamental for dealing with decoding.
\item As it is shown in the next Remark, with similar ideas to our approach, the non-binary case could be also solved.
\item Moreover, as we will see in Section \ref{Section4}, 
our algorithm allows the computation of a test-set, which is a much more smaller structure than $\mathrm{CL}(\mathcal C)$ but which could be used to solve the same problems. Also in this paper it is proven that it is a subset of the so called set of zero-neighbors and it contains any minimal test set according to the cardinality.
\end{enumerate}
}
\end{remark}

{
\begin{remark}
Also the same idea could be implemented in the most general case of
linear codes over  {$\mathbb F^n_q$, with $q=p^r$ and $p$ a prime. If we
define for} $\mathbf x$ and $\mathbf y$ in $\mathbb F^n_q$,
define $\mathbf x\leq\mathbf y$ provided that $\mathrm{supp}(\mathbf
x_i) \subseteq \mathrm{supp}(\mathbf y_i)$ for all $i=1,\ldots, n$ where
$\mathbf x_i$ is the $p$-adic expansion of the $i$th component of
$\mathbf x$. {In this case the ideas in \cite{borges:2007a}} could be used to compute
a complete set of coset representatives with an analogous incremental
structure with respect to the generalized support (but \textbf{not} with respect the the coset weights) and its
additive table $\phi$. Take notice that most of the chosen coset representatives may not be coset leaders if the weight of the coset is greater than the error-correcting capability of the code. On those cosets where the chosen representative is not a coset leader a descendant property could be also defined to find the coset leaders. However, other results of this paper as leader codewords and coset leaders can not be
straightforward deduced from our approach to the $q$-ary case.
\end{remark}
}

\begin{example}
\label{Example::1}
Consider the $[n=10,k=4,d=4]$ binary code $\mathcal C$ defined by the following parity check matrix:
$$H_{\mathcal C} = \left( \begin{array}{cccccccccc}
1 & 0 & 0 & 0 & 1 & 0 & 0 & 0 & 0 & 0 \\
1 & 0 & 1 & 1 & 0 & 1 & 0 & 0 & 0 & 0 \\
1 & 1 & 0 & 1 & 0 & 0 & 1 & 0 & 0 & 0 \\
1 & 1 & 1 & 0 & 0 & 0 & 0 & 1 & 0 & 0 \\
1 & 1 & 1 & 1 & 0 & 0 & 0 & 0 & 1 & 0 \\
1 & 1 & 1 & 1 & 0 & 0 & 0 & 0 & 0 & 1 \\
\end{array}\right) \in \mathbb F_2^{6\times 10}.$$

Algorithm \ref{Algorithm::1} returns the whole set of coset leaders of $\mathcal C$ described in Table \ref{Table::1} ordered w.r.t. the degree reverse lexicographic order $\prec$.
We denote by $\mathrm{CL}(\mathcal C)_{i}^j$ the $j$-th element of the set of coset leaders of weight $i$.

\begin{table}[!h]
\begin{center}
\begin{tabular}{|l|l|}
\hline
\multicolumn{2}{|c|}{Coset Leaders $\mathrm{CL}(\mathcal C)$} \\
\hline
$\mathrm{CL}(\mathcal C)_{0}$ & $[ \mathbf 0 ]$ \\ \hline
$\mathrm{CL}(\mathcal C)_{1}$ &  
$[ \mathbf e_1 ],~[ \mathbf e_2 ], ~[ \mathbf e_3 ], ~[ \mathbf e_4 ], ~[ \mathbf e_5 ],
~ [ \mathbf e_6 ], ~[ \mathbf e_7 ], ~[ \mathbf e_8 ], ~[ \mathbf e_9 ], ~[ \mathbf e_{10} ],$ \\ \hline
 
 \multirow{9}{*}{$\mathrm{CL}(\mathcal C)_{2}$} &  
$[\mathbf e_1+\mathbf e_2, \mathbf e_5+\mathbf e_6 ], 
~[\mathbf e_1+\mathbf e_3, \mathbf e_5+\mathbf e_7 ],
~[\mathbf e_1+\mathbf e_4, \mathbf e_5+\mathbf e_8 ],$\\
& $[\mathbf e_1+\mathbf e_5, \mathbf e_2+\mathbf e_6, \mathbf e_3+\mathbf e_7, \mathbf e_4+\mathbf e_8 ], 
~ [\mathbf e_1+\mathbf e_6, \mathbf e_2+\mathbf e_5],$\\
& $[\mathbf e_1+\mathbf e_7, \mathbf e_3+\mathbf e_5],
~ [\mathbf e_1+\mathbf e_8, \mathbf e_4+\mathbf e_5],
~ [\mathbf e_1+\mathbf e_9],
~ [\mathbf e_1+\mathbf e_{10}],$\\
& $ [\mathbf e_2+\mathbf e_3, \mathbf e_6+\mathbf e_7 ],
~ [\mathbf e_2+\mathbf e_4, \mathbf e_6+\mathbf e_8 ],
~ [\mathbf e_2+\mathbf e_7, \mathbf e_3+\mathbf e_6 ],$\\
& $[\mathbf e_2+\mathbf e_8, \mathbf e_4+\mathbf e_6 ],
~ [\mathbf e_2+\mathbf e_9 ],
~ [\mathbf e_2+ \mathbf e_{10}],$\\
& $[\mathbf e_3+\mathbf e_4, \mathbf e_7+\mathbf e_8 ],
~ [\mathbf e_3+\mathbf e_8, \mathbf e_4+\mathbf e_7 ],
~ [\mathbf e_3+\mathbf e_9 ],$\\ 
& $[\mathbf e_3+\mathbf e_{10}],
~ [\mathbf e_4+\mathbf e_9 ],
~ [\mathbf e_4+\mathbf e_{10}],
~ [\mathbf e_5+\mathbf e_9],$\\
& $[\mathbf e_5+\mathbf e_{10}],
~ [\mathbf e_6+\mathbf e_9],
~ [\mathbf e_6+\mathbf e_{10}],
~ [\mathbf e_7+\mathbf e_9],$\\
& $[\mathbf e_7+\mathbf e_{10}],
~ [\mathbf e_8+\mathbf e_9],
~ [\mathbf e_8+\mathbf e_{10}],
~ [\mathbf e_9+\mathbf e_{10}],$\\
\hline

 \multirow{17}{*}{$\mathrm{CL}(\mathcal C)_{3}$} &
$ [\mathbf e_1+\mathbf e_2+\mathbf e_3, \mathbf e_1+\mathbf e_6+\mathbf e_7, \mathbf e_2+\mathbf e_5+\mathbf e_7, \mathbf e_3+\mathbf e_5+\mathbf e_6 ],$\\
& $[\mathbf e_1+\mathbf e_2+\mathbf e_4, \mathbf e_1+\mathbf e_6+\mathbf e_8, \mathbf e_2+\mathbf e_5+\mathbf e_8, \mathbf e_4+\mathbf e_5+\mathbf e_6 ],$\\
& $[\mathbf e_1+\mathbf e_2+\mathbf e_7, \mathbf e_1+\mathbf e_3+\mathbf e_6, \mathbf e_2+\mathbf e_3+\mathbf e_5, \mathbf e_5+\mathbf e_6+\mathbf e_7 ],$\\
& $[\mathbf e_1+\mathbf e_2+\mathbf e_8, \mathbf e_1+\mathbf e_4+\mathbf e_6, \mathbf e_2+\mathbf e_4+\mathbf e_5, \mathbf e_5+\mathbf e_6+\mathbf e_8 ],$\\
& $[\mathbf e_1+\mathbf e_2+\mathbf e_9, \mathbf e_5+\mathbf e_6+\mathbf e_9 ],
~ [\mathbf e_1+\mathbf e_2+\mathbf e_{10}, \mathbf e_5+\mathbf e_6+\mathbf e_{10} ],$\\
& $[\mathbf e_1+\mathbf e_3+\mathbf e_4, \mathbf e_1+\mathbf e_7+\mathbf e_8, \mathbf e_3+\mathbf e_5+\mathbf e_8, \mathbf e_4+\mathbf e_5+\mathbf e_7 ],$\\
& $[\mathbf e_1+\mathbf e_3+\mathbf e_8, \mathbf e_1+\mathbf e_4+\mathbf e_7, \mathbf e_3+\mathbf e_4+\mathbf e_5, \mathbf e_5+\mathbf e_7+\mathbf e_8],$\\
& $[\mathbf e_1+\mathbf e_3+\mathbf e_9, \mathbf e_5+\mathbf e_7+\mathbf e_9 ],
~ [\mathbf e_1+\mathbf e_3+\mathbf e_{10}, \mathbf e_5+\mathbf e_7+\mathbf e_{10} ],$\\
& $[\mathbf e_1+\mathbf e_4+\mathbf e_9, \mathbf e_5+\mathbf e_8+\mathbf e_9],
~ [\mathbf e_1+\mathbf e_4+\mathbf e_{10}, \mathbf e_5+\mathbf e_8+\mathbf e_{10} ],$\\
& $[\mathbf e_1+\mathbf e_5+\mathbf e_9, \mathbf e_2+\mathbf e_6+\mathbf e_9, \mathbf e_3+\mathbf e_7+\mathbf e_9, \mathbf e_4+\mathbf e_8+\mathbf e_9 ],$\\
& $[\mathbf e_1+ \mathbf e_5+\mathbf e_{10}, \mathbf e_2+\mathbf e_6+\mathbf e_{10}, \mathbf e_3+\mathbf e_7+\mathbf e_{10}, \mathbf e_4+\mathbf e_8+\mathbf e_{10} ],$\\
& $[\mathbf e_1+\mathbf e_6+\mathbf e_9, \mathbf e_2+\mathbf e_5+\mathbf e_9 ],
~ [\mathbf e_1+\mathbf e_6+\mathbf e_{10}, \mathbf e_2+\mathbf e_5+\mathbf e_{10}],$\\
& $[\mathbf e_1+\mathbf e_7+\mathbf e_9, \mathbf e_3+\mathbf e_5+\mathbf e_9],
~ [\mathbf e_1+\mathbf e_7+\mathbf e_{10}, \mathbf e_3+\mathbf e_5+\mathbf e_{10}],$\\
& $[\mathbf e_1+\mathbf e_8+\mathbf e_9, \mathbf e_4+\mathbf e_5+\mathbf e_9], 
~ [\mathbf e_1+\mathbf e_8+\mathbf e_{10}, \mathbf e_4+\mathbf e_5+\mathbf e_{10}],$\\
& $[\mathbf e_1+\mathbf e_9+\mathbf e_{10}],$\\
& $[\mathbf e_2+\mathbf e_3+\mathbf e_8, \mathbf e_2+\mathbf e_4+\mathbf e_7, \mathbf e_3+\mathbf e_4+\mathbf e_6, \mathbf e_6+\mathbf e_7+\mathbf e_8 ],$\\
& $[\mathbf e_5+\mathbf e_9+\mathbf e_{10}]$\\
\hline
\end{tabular}
\caption{Set of coset-leaders of Example \ref{Example::1}}
\label{Table::1}
\end{center}
\end{table}

The main difference between this paper and previous works is the considera\-tion of all coset leaders and not just those belonging to $\mathcal N$. Note that no subword of two elements of $\mathbf y = \mathbf e_4 + \mathbf e_5 + \mathbf e_6 \in \mathrm{CL}(\mathcal C)_3^2$ is part of $\mathcal N$,  i.e. 
$\mathbf e_4 + \mathbf e_5 \in \mathrm{CL}(\mathcal C)_2^7$,
$\mathbf e_4 + \mathbf e_6 \in \mathrm{CL}(\mathcal C)_2^{13}$ 
and $\mathbf e_5 + \mathbf e_6 \in \mathrm{CL}(\mathcal C)_2^1$ do not lie in $\mathcal N$.
Therefore the importance of the second property of the Definition \ref{List-Definition} to obtain the complete set of coset leaders.

Algorithm \ref{Algorithm::1} could be adapted without incrementing the complexity to get more information such as:
\begin{itemize}
\item The \emph{Newton radius} $\nu(\mathcal C)$ of a binary code $\mathcal C$ is the largest weight of any error vector that can be uniquely corrected, or equivalently, $\nu (\mathcal C)$ is the largest value among the cosets with only one coset leader.
In our example it suffice to analyze the last element of the list $\mathrm{CL}(\mathcal C)$ to obtain the coset of highest weight which contains only one leader, i.e. 
$\nu(\mathcal C) = 3$ since $\mathrm{CL}(\mathcal C)_{3}^{23} = [\mathbf e_5 + \mathbf e_9 + \mathbf e_{10}]$.

\item The \emph{covering radius} $\rho(\mathcal C)$ of a binary code $\mathcal C$ is the smallest integer $s$ such that $\mathbb F_2^n$ is the union of the spheres of radius $s$ centered at the codewords of $\mathcal C$, i.e. 
$\rho (\mathcal C) = \max_{\mathbf y \in \mathbb F_2^n} \min_{\mathbf c \in \mathcal C} d_H(\mathbf y, \mathbf c)$. 
It is well known that $\rho(\mathcal C)$ is the weight of the coset of largest weight.
Likewise, in our example $\rho(\mathcal C) = 3$ since $\mathrm{CL}(\mathcal C)_{3}^{23} = [\mathbf e_5 + \mathbf e_9 + \mathbf e_{10}]$ is the coset of highest weight.

\item The \emph{Weight Distribution of the Coset Leaders} of a binary code $\mathcal C$ is the list 
$\mathrm{WDCL} = (\alpha_0, \ldots, \alpha_n)$ where $\alpha_i$ with $1\leq i \leq n$ is the number of cosets with coset leaders of weight $i$. Note that the set $\mathcal N$ is enough to compute this parameter. It is clear that 
$$\mathrm{WDCL} = \left[\begin{array}{cccccccccc}
1, & 10,& 30, & 23,& 0, & 0,& 0,& 0,& 0,& 0
\end{array}\right].$$

\item The number of coset leaders in each coset :
$$\sharp \left(\mathrm{CL} \right) = \left[\begin{array}{l}
1, \\
1, 1, 1, 1, 1, 1, 1, 1, 1, 1, \\
2, 2, 2, 4, 2, 2, 2, 1, 1, 2,
2, 2, 2, 1, 1, 2, 2, 1, 1, 1,
1, 1, 1, 1, 1, 1, 1, 1, 1, 1, \\
4, 4, 4, 4, 2, 2, 4, 4, 2, 2, 
2, 2, 4, 4, 2, 2, 2, 2, 2, 2,
1, 4, 1\\
\end{array}\right]$$
Note that there are $30$ of the $64$ cosets where the Complete Decoding Problem (CDP) has a unique solution. It is also interesting to note that among the cosets with one leaders there are more cosets exceeding the error correction capacity ($19$) than achieving such capacity ($11$).
\end{itemize}
\end{example}

\subsection{Complexity Analysis}
The next theorem states an upper bound for the number of iterations that Algorithm \ref{Algorithm::1} will perform.

\begin{theorem}
Algorithm \ref{Algorithm::1} computes the set of coset leaders of a given binary code $\mathcal C$
of length $\mathbf n$ after at most $\mathbf n|\mathrm{CL}(\mathcal C)|$ iterations.
\end{theorem}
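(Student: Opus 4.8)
The plan is to show that the number of iterations of the \texttt{while} loop equals the total number of distinct vectors ever stored in $\texttt{Listing}$, that is $|\texttt{List}|$ in the sense of Definition \ref{List-Definition}, and then to bound $|\texttt{List}|$ by $n\,|\mathrm{CL}(\mathcal C)|$, refining the estimate already sketched at the end of the proof of Theorem \ref{Theorem3}.

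For the first part, each iteration begins with a call to $\texttt{NextTerm}$, which removes precisely one element from $\texttt{Listing}$; by Theorem \ref{Theorem3} the algorithm terminates with $\texttt{Listing}$ empty, so every vector that is ever inserted is removed exactly once. It then remains to check that no vector is inserted twice: $\texttt{InsertNext}$ discards duplicates, so $\texttt{Listing}$ never contains repetitions, and by Remark \ref{Remark1} every vector added by $\texttt{InsertNext}$ is $\succ$ every vector already deleted from $\texttt{Listing}$, so a vector that has already been processed can never be reinserted. Hence the number of iterations is exactly $|\texttt{List}|$.

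For the second part I would pin down the shape of $\texttt{List}$. Its first element is $\mathbf 0$; every other element is produced by a call $\texttt{InsertNext}[\mathbf t,\cdot]$, and (as observed in the proof of Theorem \ref{Theorem3}) such a call occurs only when $\mathbf t\in\mathrm{CL}(\mathcal C)$: in the ``new coset'' branch because then $N(\mathbf t)=\mathbf t$, and in the other branch because $\mathrm w_H(\mathbf t)=\mathrm w_H(\mathbf t_j)=\mathrm w_H(N(\mathbf t))$. Such a call inserts the $n-\mathrm w_H(\mathbf t)$ vectors $\mathbf t+\mathbf e_k$ with $k\notin\mathrm{supp}(\mathbf t)$. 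Conversely, by Theorem \ref{Theorem2} every $\mathbf w\in\mathrm{CL}(\mathcal C)$ appears in $\texttt{List}$, hence is processed by $\texttt{NextTerm}$ and triggers exactly one such call. Therefore
$$\texttt{List}=\{\mathbf 0\}\cup\bigcup_{\mathbf w\in\mathrm{CL}(\mathcal C)}\{\mathbf w+\mathbf e_k \mid k\notin\mathrm{supp}(\mathbf w)\},$$
so that
$$|\texttt{List}|\ \leq\ 1+\sum_{\mathbf w\in\mathrm{CL}(\mathcal C)}(n-\mathrm w_H(\mathbf w))\ \leq\ 1+n\,|\mathrm{CL}(\mathcal C)|-(|\mathrm{CL}(\mathcal C)|-1),$$
using $\mathrm w_H(\mathbf w)\geq 1$ for the $|\mathrm{CL}(\mathcal C)|-1$ nonzero coset leaders. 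Since $\mathcal C$ has at least two cosets, $|\mathrm{CL}(\mathcal C)|\geq 2$, and the right-hand side is at most $n\,|\mathrm{CL}(\mathcal C)|$, which is the claimed bound.

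The only delicate point is the bookkeeping of the first two paragraphs: one must make sure that the iteration count equals $|\texttt{List}|$ and not merely some larger quantity (this is exactly where Remark \ref{Remark1} is needed, to rule out reinsertions), and that every inserted vector is of the form $\mathbf w+\mathbf e_k$ with $\mathbf w$ a coset leader, so that the displayed description of $\texttt{List}$ is exhaustive. Once these are in place the final estimate is an immediate summation.
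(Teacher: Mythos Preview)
Your proof is correct and follows essentially the same strategy as the paper's: identify the number of iterations with $|\texttt{List}|$ and then bound $|\texttt{List}|$ by describing its elements as sums $\mathbf w+\mathbf e_i$ with $\mathbf w\in\mathrm{CL}(\mathcal C)$. Your treatment is in fact more careful on both counts---the paper simply asserts the first equality and, for the second, uses the looser description $\texttt{List}=\{\mathbf w+\mathbf e_i:\mathbf w\in\mathrm{CL}(\mathcal C),\ i\in\{1,\dots,n\}\}$ (allowing $i\in\mathrm{supp}(\mathbf w)$ as well), which gives $|\texttt{List}|\le n\,|\mathrm{CL}(\mathcal C)|$ immediately without your final arithmetic step.
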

\begin{proof}
Let $\tt{List}$ be the set constructed in the proof of Theorem \ref{Theorem3}.
Notice that by looking how Algorithm \ref{Algorithm::1} is constructed, the number of iterations is exactly the size of ${\tt List}$. Moreover note that we can
write $\tt{List}$ as the following set
$${\tt List}=\{\mathbf w + \mathbf e_i\mid \mathbf w\in\mathrm{CL}(\mathcal
C)\mbox{ and }{i \in \left\lbrace 1,\ldots ,n\right\rbrace \}}.$$
Therefore it is clear that the size of $\tt{List}$ is bounded by $\mathbf n|\mathrm{CL}(\mathcal C)|$.
\end{proof}

\begin{remark}
\begin{enumerate}
\item We can proceed analogously to the previous proof to estimate the required memory space which is $\mathcal O\left(\mathbf n|\mathrm{CL}(\mathcal C)|\right)$. In the best case, $\mathcal O\left(|\mathrm{CL}(\mathcal C)|\right)$ of memory space is needed, thus Algorithm \ref{Algorithm::1} is near the optimal case when considering memory requirements. However the order of the set $\mathrm{CL}(\mathcal C)$ is exponential on the codimension of the code, i.e. $\mathcal O\left( 2^{n-k}\right)$, so this method is impractical for large codes.

\item Algorithm \ref{Algorithm::1} generates at most $\mathbf n|\mathrm{CL}(\mathcal C)|$ words from $\mathbb F_2^n$ to compute the set of all coset leaders. Therefore, the proposed algorithm has near-optimal performance and significantly reduced complexity.
\end{enumerate}
\end{remark}

Note that the statement $\mathrm B(\mathbf c, e) \cap \mathrm B(\hat{\mathbf c}, e)= \emptyset$ holds true for all $\mathbf c, \hat{\mathbf c}\in \mathcal C$ with $\mathbf c \neq \hat{\mathbf c}$ if and only if $2e+1 \leq d(\mathcal C)$ is valid. Moreover, $\mathbb F_2^n = \cup_{\mathbf c\in \mathcal C} \mathrm B(\mathbf c, e)$ holds true if and only if the covering radius satisfies that $\rho (\mathcal C) \leq e$. Therefore the minimum distance and the covering radius of any code are related by $d(\mathcal C)\leq 2 \rho(\mathcal C) +1$.

\begin{lemma}
For any $[n,k]$ binary code $\mathcal C$ the following inequality holds:
$$\sum_{i=0}^t \binom{n}{i} \leq |\mathrm{CL}(\mathcal C)| \leq \sum_{j=0}^{\rho(\mathcal C)} \binom{n}{j},$$
where $t$ denotes the error-correcting capacity of $\mathcal C$ and $\rho(\mathcal C)$ its covering radius.
\end{lemma}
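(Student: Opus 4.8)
The plan is to establish the two inequalities separately, both by exhibiting explicit subsets of $\mathbb F_2^n$ that are in bijective correspondence with a portion of the cosets. For the lower bound, I would argue that any vector of weight at most $t$ is the unique coset leader of its coset: indeed, if $\mathrm{w}_H(\mathbf v)\le t$ and $\mathbf v + \mathbf c$ lay in the same coset with $\mathbf c\in\mathcal C\setminus\{\mathbf 0\}$, then $\mathrm{w}_H(\mathbf v)+\mathrm{w}_H(\mathbf v+\mathbf c)\ge d_H(\mathbf v,\mathbf v+\mathbf c)=\mathrm{w}_H(\mathbf c)\ge d(\mathcal C)\ge 2t+1$, forcing $\mathrm{w}_H(\mathbf v+\mathbf c)\ge t+1>\mathrm{w}_H(\mathbf v)$. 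Hence distinct vectors of weight $\le t$ lie in distinct cosets (if two such vectors $\mathbf v_1,\mathbf v_2$ were in the same coset, then $\mathbf v_1-\mathbf v_2\in\mathcal C$ and the same weight computation gives a contradiction unless $\mathbf v_1=\mathbf v_2$), and each is the leader of its coset. Since there are exactly $\sum_{i=0}^t\binom{n}{i}$ such vectors, and each contributes at least one element to $\mathrm{CL}(\mathcal C)$, the lower bound follows. This is essentially the classical sphere-packing argument and I do not expect it to be the difficult part.

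For the upper bound, the key observation is that every coset leader has weight at most $\rho(\mathcal C)$. This is immediate from the definitions recalled just before the lemma: $\rho(\mathcal C)=\max_{\mathbf y\in\mathbb F_2^n}\min_{\mathbf c\in\mathcal C}d_H(\mathbf y,\mathbf c)$ is precisely the weight of the heaviest coset, so every coset has weight $\le\rho(\mathcal C)$, and a coset leader realizes that weight. Therefore $\mathrm{CL}(\mathcal C)\subseteq\{\mathbf v\in\mathbb F_2^n\mid \mathrm{w}_H(\mathbf v)\le\rho(\mathcal C)\}$, and taking cardinalities gives $|\mathrm{CL}(\mathcal C)|\le\sum_{j=0}^{\rho(\mathcal C)}\binom{n}{j}$.

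Assembling the two bounds yields the claimed chain of inequalities. The main subtlety, such as it is, lies in the lower bound: one must be careful to count cosets rather than vectors, i.e. to check that no two weight-$\le t$ vectors collapse into the same coset, which is exactly the unique-coset-leader property stated (without proof of this quantitative form) right after Definition \ref{CL-Definition}. Both arguments are short once the covering-radius characterization and the error-correcting-capacity bound are invoked, so I would present the proof in two compact paragraphs mirroring the structure above.
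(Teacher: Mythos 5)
Your proposal is correct and follows essentially the same route as the paper: the lower bound by showing every vector of weight at most $t$ is a coset leader via the minimum-distance/triangle-inequality argument, and the upper bound by noting that every coset leader has weight at most $\rho(\mathcal C)$. The only cosmetic difference is your extra check that weight-$\le t$ vectors lie in distinct cosets, which is not needed since $\mathrm{CL}(\mathcal C)$ is a set of vectors (not of cosets), so membership of each such vector in $\mathrm{CL}(\mathcal C)$ already gives the bound.
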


\begin{proof}
Let us first prove that every vector $\mathbf e \in \mathbb F_2^n$ with $\mathrm w_H(\mathbf e)\leq t$ is a coset leader. Assume to the contrary that there exists a vector $\mathbf e \in \mathbb F_2^n$ with $\mathrm w_H(\mathbf e)\leq t$ and $\mathbf e \notin \mathrm{CL}(\mathcal C)$. Hence there is another vector $\hat{\mathbf e}\in \mathbb F_2^n$ with $S(\mathbf e) = S(\hat{\mathbf e})$ and $\mathrm w_H(\hat{\mathbf e})< \mathrm w_H(\mathbf e)$. Or equivalently, there exists a codeword $\mathbf e - \hat{\mathbf e}\in \mathcal C$ with $$\mathrm w_H(\mathbf e - \hat{\mathbf e}) \leq \mathrm w_H(\mathbf e) + \mathrm w_H(\hat{\mathbf e}) \leq 2t \leq d(\mathcal C)-1$$
which is a contradiction to the definition of the minimum distance of $\mathcal C$.
Hence, we have actually proved that the number of vectors of weight up to $t$ is a lower bound for the cardinality of the set $\mathrm{CL}(\mathcal C)$, i.e.
$$\sum_{i=0}^t \binom{n}{i} \leq |\mathrm{CL}(\mathcal C)|.$$

Furthermore, by the definition of the covering radius of $\mathcal C$, we have that for all $\mathbf y \in \mathbb F_2^n$ there exists a codeword $\mathbf c \in \mathcal C$ such that $\mathrm d_H(\mathbf c, \mathbf y)\leq \rho(\mathcal C)$. In other words, there exists a vector $\mathbf e \in \mathbb F_2^n$ such that $\mathrm w_H(\mathbf e)\leq \rho(\mathcal C)$ and $S(\mathbf e) = S(\mathbf y)$. Thus, $\mathrm w_H\left(\mathrm{CL}(\mathbf y)\right)\leq \rho(\mathcal C)$ and the lemma holds. 
\end{proof}

If the above lemma holds with equality then $\mathcal C$ is called a \emph{perfect} code. That is to say, let $\mathcal C$ be a linear code with more than one codeword, then $\mathcal C$ is a perfect code if and only if $\rho(\mathcal C) = t$.

\section{Computing a test set}
\label{Section4}

In this section we show how Algorithm \ref{Algorithm::1} can be adapted to compute a \emph{test-set} for a binary linear code.

\begin{definition}
\label{LeaderCodewords}
The set of \emph{leader codewords} of a given binary code $\mathcal C$  is defined as:
$$\mathrm L(\mathcal C) = \left\{ 
\mathbf n_1 + \mathbf n_2 + \mathbf e_i \in \mathcal C\setminus \{\mathbf 0 \} \mid 
\begin{array}{ccc}
i \notin \mathrm{supp}(\mathbf n_1) 
&\hbox{ and }& 
\mathbf n_1, \mathbf n_2 \in \mathrm{CL}(\mathcal C)
\end{array}
\right\}$$
\end{definition}

For efficiency reasons we are just interested in a particular case of the above object, when 
$\mathrm{supp}(\mathbf n_1 + \mathbf e_i) \cap \mathrm{supp}(\mathbf n_2) = \emptyset$.

\begin{algorithm2e}[!h]
\KwData{A weight compatible ordering $\prec$ and a parity check matrix $H$ of a binary code $\mathcal C$.} 
\KwResult{The set of coset leaders $\mathrm{CL}(\mathcal C)$ and the set of leader codewords $\mathrm{L}(\mathcal C)$ for $\mathcal C$.}
${\tt Listing} \longleftarrow [0]$;
$r \longleftarrow 0$;
$\mathrm{CL}(\mathcal C) \longleftarrow \emptyset$;
$\mathcal S \longleftarrow \emptyset$;
$\mathrm{L}(\mathcal C) \longleftarrow \emptyset$\;
\While{${\tt Listing} \neq \emptyset$}
{
	$\mathbf t \longleftarrow {\tt NextTerm}[{\tt Listing}]$\;
	$\mathbf s \longleftarrow \mathbf t H^T$\;
	$j \longleftarrow {\tt Member}[s,\mathcal S]$\;
	\eIf{$j \neq {\tt false}$}
	{
		\If{$\mathrm{w}_H(\mathbf t) = \mathrm{w}_H(\mathrm{CL}(\mathcal C)[j][1])$}
		{
		$\mathrm{CL}(\mathcal C)[j] \longleftarrow \mathrm{CL}(\mathcal C)[j] \cup \{ \mathbf t\}$\;
		${\tt Listing} \longleftarrow {\tt InsertNext}[\mathbf t, {\tt Listing}]$\;
		}
		\For{$i \in \mathrm{supp}(\mathbf t)~:~ \mathbf t = \mathbf t' + \mathbf e_i$ with $\mathbf t' \in \mathrm{CL}(\mathcal C)$ and $i \notin \mathrm{supp}(\mathbf t')$}
		{
		\hspace{-0.1cm}$\mathrm{L}(\mathcal C) \longleftarrow \mathrm L(\mathcal C) \cup \left\{\mathbf t + \mathbf t_k \mid \mathbf t_k \in \mathrm{CL}(\mathcal C)[j] {\small \hbox{ and }}\mathrm{supp}(\mathbf t)\cap \mathrm{supp}(\mathbf t_k) = \emptyset\right\}$
		}
	}
	{ 
		$r\longleftarrow r+1$;  
		$\mathrm{CL}(\mathcal C)[r] \longleftarrow \{ \mathbf t\}$;
		$\mathcal S[r] \longleftarrow \mathbf s$\;
		${\tt Listing} = {\tt InsertNext}[\mathbf t, {\tt Listing}]$\;
	}
} 
\caption{Computation of a \emph{test-set} for $\mathcal C$}
\label{Algorithm::2}
\end{algorithm2e}


\begin{remark}
\label{Remark2}
The difference between Algorithm \ref{Algorithm::1} and Algorithm \ref{Algorithm::2} are \textbf{Steps 10-12} from Algorithm \ref{Algorithm::2}.
\end{remark}

\begin{theorem}
\label{Theorem4}
Algorithm \ref{Algorithm::2} computes the set of coset leaders and the set of leader codewords of a given binary code $\mathcal C$.
\end{theorem}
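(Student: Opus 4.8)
The plan is to reduce the claim to what has already been established for Algorithm \ref{Algorithm::1}. By Remark \ref{Remark2}, Algorithm \ref{Algorithm::2} coincides with Algorithm \ref{Algorithm::1} on everything that concerns the construction of the object $\tt{List}$, the syndrome bookkeeping in $\mathcal S$, and the accumulation of $\mathrm{CL}(\mathcal C)$; the only changes are \textbf{Steps 10--12}, which replace the computation of the \emph{Matphi} function $\phi$ by the accumulation of $\mathrm{L}(\mathcal C)$. Hence I would first invoke Theorem \ref{Theorem3} to conclude, verbatim, that the set $\tt{List}$ built by Algorithm \ref{Algorithm::2} satisfies Definition \ref{List-Definition}, that the \textbf{while} loop terminates after finitely many iterations (the size of $\tt{List}$ is bounded by $\mathbf n|\mathrm{CL}(\mathcal C)|$), and that upon termination $\mathrm{CL}(\mathcal C)$ holds exactly the set of coset leaders of $\mathcal C$, partitioned by coset via the syndrome table $\mathcal S$. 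This disposes of the first half of the statement with essentially no new work.

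It remains to prove that, upon termination, $\mathrm{L}(\mathcal C)$ equals the set of leader codewords as given in Definition \ref{LeaderCodewords} (in the restricted form $\mathrm{supp}(\mathbf n_1 + \mathbf e_i)\cap \mathrm{supp}(\mathbf n_2)=\emptyset$ singled out after that definition). I would argue both inclusions. For ``$\subseteq$'': whenever \textbf{Step 11} fires, we have $\mathbf t = \mathbf t' + \mathbf e_i$ with $\mathbf t' \in \mathrm{CL}(\mathcal C)$, $i \notin \mathrm{supp}(\mathbf t')$, and $\mathbf t$ itself lies in $\tt{List}$; moreover $\mathbf t_k$ ranges over $\mathrm{CL}(\mathcal C)[j]$, i.e. over coset leaders of the coset $\mathcal C + \mathbf t$, so $\mathbf t - \mathbf t_k \in \mathcal C$, hence $\mathbf t + \mathbf t_k \in \mathcal C$ over $\mathbb F_2$. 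Since $\mathbf t$ need not itself be a coset leader, I must check that it is nevertheless a \emph{sum of two coset leaders plus a unit vector}: indeed $\mathbf t' \in \mathrm{CL}(\mathcal C)$ by construction, and $\mathbf t = \mathbf t' + \mathbf e_i$ sits in $\tt{List}$, so by the structure of $\tt{List}$ (Theorem \ref{Theorem2} together with property 2 of Definition \ref{List-Definition}) $\mathbf t'$ was itself produced from a coset leader; writing $\mathbf n_2 := \mathbf t_k$, $\mathbf n_1 := \mathbf t'$ and using the support-disjointness condition of \textbf{Step 11} gives membership in $\mathrm{L}(\mathcal C)$, and $\mathbf t + \mathbf t_k \neq \mathbf 0$ because otherwise $\mathbf t = \mathbf t_k$ would be a coset leader with $\mathbf t' \prec \mathbf t$ a coset leader of the same coset, contradicting minimality of weight unless $\mathbf t=\mathbf 0$. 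For ``$\supseteq$'': given $\mathbf n_1 + \mathbf n_2 + \mathbf e_i \in \mathcal C\setminus\{\mathbf 0\}$ with $i\notin\mathrm{supp}(\mathbf n_1)$, $\mathbf n_1,\mathbf n_2\in\mathrm{CL}(\mathcal C)$, and the supports of $\mathbf n_1+\mathbf e_i$ and $\mathbf n_2$ disjoint, set $\mathbf t := \mathbf n_1 + \mathbf e_i$; then $\mathbf n_1 \prec \mathbf t$ are support-nested and $\mathbf n_1$ is a coset leader, so by property 2 of Definition \ref{List-Definition} $\mathbf t \in \tt{List}$, hence $\mathbf t$ is processed as some ${\tt NextTerm}$; its coset is $\mathcal C + \mathbf n_2$ (since $\mathbf t - \mathbf n_2 \in \mathcal C$), so $j \neq {\tt false}$ and $\mathbf n_2 \in \mathrm{CL}(\mathcal C)[j]$. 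The key timing point is that $\mathbf n_2$ is already recorded in $\mathrm{CL}(\mathcal C)[j]$ when $\mathbf t$ is processed: this follows from Remark \ref{Remark1}, exactly as in the argument for \textbf{Step 18} in the proof of Theorem \ref{Theorem3}, because $\mathbf n_2 \preceq \mathbf t$ (weights: $\mathrm w_H(\mathbf n_2) = \mathrm w_H(\mathbf n_1) \le \mathrm w_H(\mathbf n_1)+1 = \mathrm w_H(\mathbf t)$) and all coset leaders of a coset are inserted before any strictly heavier element of $\tt{List}$ in that coset is dequeued. Then \textbf{Step 11} fires for this $i$ and \textbf{Step 12} adds $\mathbf t + \mathbf n_2 = \mathbf n_1 + \mathbf n_2 + \mathbf e_i$ to $\mathrm{L}(\mathcal C)$.

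The main obstacle I anticipate is precisely this timing/ordering argument in the ``$\supseteq$'' direction: one must be sure that when the word $\mathbf t = \mathbf n_1 + \mathbf e_i$ is dequeued, every coset leader $\mathbf n_2$ of its coset has \emph{already} been found and stored in $\mathrm{CL}(\mathcal C)[j]$, rather than being discovered at a later iteration, in which case the pair $(\mathbf t,\mathbf n_2)$ would never be examined together. This rests on the fact that $\tt{List}$ is processed in nondecreasing $\prec$-order (Remark \ref{Remark1}) and that for any coset, all of its coset leaders have weight equal to the coset weight while any other element of $\tt{List}$ lying in that coset has strictly larger weight — so the leaders are enqueued and processed (and thus appended to $\mathrm{CL}(\mathcal C)[j]$ in \textbf{Steps 8--9} or \textbf{Step 14}) strictly before $\mathbf t$. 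I would also need a brief remark that the edge cases $\mathbf n_2 = \mathbf 0$ (the zero coset) and $\mathbf t$ itself being a coset leader are handled correctly by the support-disjointness filter and the exclusion of $\mathbf 0$ from $\mathrm{L}(\mathcal C)$. Everything else is routine bookkeeping inherited from Theorem \ref{Theorem3}.
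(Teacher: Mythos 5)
Your reduction to Theorem \ref{Theorem3} via Remark \ref{Remark2}, and your soundness direction (every word inserted at \textbf{Step 11} is of the form $\mathbf t'+\mathbf e_i+\mathbf t_k$ with $\mathbf t',\mathbf t_k\in\mathrm{CL}(\mathcal C)$, disjoint supports, $i\notin\mathrm{supp}(\mathbf t')$, hence a leader codeword), coincide with the paper's argument and are fine. The genuine gap is in your completeness direction, namely the timing claim that $\mathbf n_2$ is already stored in $\mathrm{CL}(\mathcal C)[j]$ when $\mathbf t=\mathbf n_1+\mathbf e_i$ is dequeued. The supporting computation $\mathrm w_H(\mathbf n_2)=\mathrm w_H(\mathbf n_1)$ is unjustified (and false in general); all one can say is $\mathrm w_H(\mathbf n_2)\le \mathrm w_H(\mathbf t)$, because $\mathbf n_2$ is a coset leader of the coset containing $\mathbf t$. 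When equality holds, $\mathbf t$ is itself a coset leader and nothing prevents $\mathbf t\prec\mathbf n_2$; then $\mathbf n_2$ is dequeued, and hence recorded, only \emph{after} $\mathbf t$, so the pair is never examined at the iteration you designate. This is not an exotic corner case: in Example \ref{Example::1} the leader codeword $\mathbf e_3+\mathbf e_4+\mathbf e_7+\mathbf e_8$ decomposes as $\mathbf n_1=\mathbf e_3$, $i=4$, $\mathbf n_2=\mathbf e_7+\mathbf e_8$, and $\mathbf e_3+\mathbf e_4$, $\mathbf e_7+\mathbf e_8$ are two coset leaders of the same coset; at the iteration in which the $\prec$-smaller of them is dequeued, the other is not yet in $\mathrm{CL}(\mathcal C)[j]$. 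Your closing remark that the case ``$\mathbf t$ itself a coset leader'' is handled by the support-disjointness filter and the exclusion of $\mathbf 0$ does not repair this: those features only prevent spurious insertions, they never cause the missing pair to be examined.

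The repair is to attach each leader codeword to the iteration in which the $\prec$-\emph{larger} of the two words $\mathbf n_1+\mathbf e_i$ and $\mathbf n_2$ is dequeued. If $\mathbf n_2\prec\mathbf t$ your argument goes through verbatim. If $\mathbf t\prec\mathbf n_2$ (forcing $\mathrm w_H(\mathbf n_2)=\mathrm w_H(\mathbf t)$, so $\mathbf t\in\mathrm{CL}(\mathcal C)$ and $\mathbf n_2\neq\mathbf 0$), then when $\mathbf n_2$ is dequeued we have $j\neq{\tt false}$, $\mathbf t$ has already been stored in $\mathrm{CL}(\mathcal C)[j]$ (either at its own dequeuing via the weight test or as the first representative of its coset), the condition of \textbf{Step 10} holds for $\mathbf n_2$ because any $i'\in\mathrm{supp}(\mathbf n_2)$ gives $\mathbf n_2+\mathbf e_{i'}\in\mathrm{CL}(\mathcal C)$ by Theorem \ref{Theorem1}, and the support-disjointness with $\mathbf t$ holds by hypothesis, so \textbf{Step 11} inserts $\mathbf n_2+\mathbf t=\mathbf n_1+\mathbf n_2+\mathbf e_i$. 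With this case added, your proof is complete and follows essentially the paper's route: the paper compresses exactly this point into the observation that ${\tt Listing}$ is processed in increasing order, so that at each loop all leader codewords whose two components are $\preceq\mathbf t$ have been examined, completeness of $\mathrm{CL}(\mathcal C)$ (Theorem \ref{Theorem3}) then doing the rest.
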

\begin{proof}
Taking into account Remark \ref{Remark2} and Theorem \ref{Theorem3} we only need to prove that Algorithm \ref{Algorithm::2} computes the set of leader codewords.

We first observe that all the words inserted in the set $\mathrm L(\mathcal C)$ during Algorithm \ref{Algorithm::2} are leader codewords. These elements are of the type $\mathbf v = \mathbf t'+\mathbf e_i + \mathbf t_k$ where $\mathbf t=\mathbf t'+\mathbf e_i$ and $\mathbf t_k$ are in the same coset. {Moreover we have that
$$\begin{array}{cccc}
\mathbf t', \mathbf t_k \in \mathrm{CL}(\mathcal C), & 
\mathrm{supp}(\mathbf t') \cap \mathrm{supp}(\mathbf t_k) = \emptyset & \hbox{ and } &
i\notin \mathrm{supp}(\mathbf t').
\end{array}$$}
Therefore, by Definition \ref{LeaderCodewords} $\mathbf v$ is a leader codeword.

Note that the list $\tt{Listing}$ is in ascending order w.r.t. $\prec$, {therefore in each loop we study all leader codewords of the form 
$\mathbf n_1 + \mathbf e_i + \mathbf n_2$ with $\mathbf n_1, \mathbf n_2 \leq \mathbf t$}.
The fact that Theorem \ref{Theorem3} shows that Algorithm \ref{Algorithm::2} computes the whole set $\mathrm{CL}(\mathcal C)$ proves that all leader codewords are introduced in $\mathrm L(\mathcal C)$.
\end{proof}

By its construction, Algorithm \ref{Algorithm::2} has the same time complexity as Algorithm \ref{Algorithm::1}. The advantage of computing the set of leader codewords is that it helps in solving the same problems as the function \emph{Matphi} does but with a structure which is considerately smaller.

\begin{definition}
We define the subset $\mathrm L^1(\mathcal C)$ of $\mathrm L(\mathcal C)$ as
$$\mathrm L^1(\mathcal C) = \left\{ 
\mathbf n_1 + \mathbf n_2 + \mathbf e_i \in \mathcal C \setminus \{ \mathbf 0\}
\left| 
\begin{array}{c}
i\notin \mathrm{supp}(\mathbf n_1), ~
\mathbf n_1 \in \mathrm{CL}(\mathcal C), ~
 \mathbf n_2 \in \mathcal N\\
 \hbox{and }
\mathrm{w}_H(\mathbf n_1 + \mathbf e_i) > \mathrm{w}_H(\mathbf n_2), \\
\end{array}
\right.\right\}.$$
\end{definition}

\begin{remark}
Note that the condition $\mathrm{w}_H(\mathbf n_1 + \mathbf e_i) > \mathrm{w}_H(\mathbf n_2)$ is imposed just to improve the efficiency of computing this set. Therefore $\mathrm L(\mathcal C)$ can be rewritten as
{$$\mathrm L(\mathcal C) = \left\{ 
\mathbf n_1 + \mathbf n_2 + \mathbf e_i \in \mathcal C\setminus \{\mathbf 0 \} \left|
\begin{array}{c}
i \notin \mathrm{supp}(\mathbf n_1), 
\mathbf n_1, \mathbf n_2 \in \mathrm{CL}(\mathcal C)\\
\hbox{and } \mathrm{w}_H(\mathbf n_1 + \mathbf e_i) > \mathrm{w}_H(\mathbf n_2)
\end{array}\right.
\right\}.$$}

Thus, the only difference between the sets $\mathrm L^1(\mathcal C)$ and $\mathrm L(\mathcal C)$ is that $\mathbf n_2 \in \mathcal N$ instead of $\mathbf n_2 \in \mathrm{CL}(\mathcal C)$. In other words, to the element $\mathbf n_2$ in $\mathrm L^1(\mathcal C)$ is required not only to belong to the set of the coset leaders but also to be the smallest element in its coset according to a fixed weight compatible ordering $\prec$.
\end{remark}

\begin{theorem}
\label{Theorem5}
The subset $\mathrm L^1(\mathcal C)$ of $\mathrm L(\mathcal C)$ is a test-set for $\mathcal C$.
\end{theorem}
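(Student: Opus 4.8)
The plan is to show that $\mathrm{L}^1(\mathcal C)$ satisfies the defining property of a test-set: for every word $\mathbf y \in \mathbb F_2^n$ that is \emph{not} in $\mathrm D(\mathbf 0) = \mathrm{CL}(\mathcal C)$, there exists a leader codeword $\mathbf t \in \mathrm L^1(\mathcal C)$ with $\mathrm w_H(\mathbf y - \mathbf t) < \mathrm w_H(\mathbf y)$. So fix such a $\mathbf y$. Let $\mathbf n_2 = N(\mathbf y)$ be the unique element of $\mathcal N$ in the coset $\mathcal C + \mathbf y$ (the $\prec$-smallest word of $\tt List$ in that coset, which by Remark \ref{Extra-Remark} lies in $\mathrm{CL}(\mathcal C)$). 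Since $\mathbf y \notin \mathrm{CL}(\mathcal C)$ we have $\mathrm w_H(\mathbf n_2) < \mathrm w_H(\mathbf y)$, and in particular $\mathbf y \neq \mathbf n_2$, so $\mathbf n_2$ is a strictly lighter representative of the same coset.

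The key step is to produce the leader codeword from $\mathbf y$ by peeling off one unit at a time. Because $\mathrm w_H(\mathbf y) > \mathrm w_H(N(\mathbf y))$, there is a coordinate $i \in \mathrm{supp}(\mathbf y)$ such that, writing $\mathbf n_1 = \mathbf y - \mathbf e_i = \mathbf y + \mathbf e_i$, the word $\mathbf n_1$ still satisfies $\mathrm w_H(\mathbf n_1) \ge \mathrm w_H(N(\mathbf n_1))$ — more precisely, I would choose $i$ so that $\mathbf n_1 \in \mathrm{CL}(\mathcal C)$. Such an $i$ exists: take $\mathbf n_1$ to be any coset leader of the coset $\mathcal C + (\mathbf y + \mathbf e_i)$ for a suitable $i$; concretely, pick a codeword $\mathbf c$ with $d_H(\mathbf y,\mathbf c) = \mathrm w_H(N(\mathbf y)) < \mathrm w_H(\mathbf y)$, so $\mathrm{supp}(\mathbf y) \not\subseteq \mathrm{supp}(\mathbf y - \mathbf c)$ forces the existence of $i \in \mathrm{supp}(\mathbf y) \cap \mathrm{supp}(\mathbf c)$; removing this $i$ from $\mathbf y$ strictly decreases the distance to $\mathbf c$, and iterating (via Theorem \ref{Theorem1}) we reach a coset leader $\mathbf n_1 \le \mathbf y$ with $\mathbf y = \mathbf n_1 + \mathbf e_i + (\text{lighter part})$ and $i \notin \mathrm{supp}(\mathbf n_1)$. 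Then set $\mathbf t = \mathbf n_1 + \mathbf e_i + \mathbf n_2$, where $\mathbf n_2 = N(\mathbf n_1 + \mathbf e_i) \in \mathcal N$. By construction $\mathbf n_1 + \mathbf e_i$ and $\mathbf n_2$ lie in the same coset, so $\mathbf t \in \mathcal C$; it is nonzero since $\mathrm w_H(\mathbf n_1 + \mathbf e_i) > \mathrm w_H(\mathbf n_2)$ (as $\mathbf n_1 + \mathbf e_i \notin \mathrm{CL}(\mathcal C)$, being heavier than $N(\mathbf n_1+\mathbf e_i)$); hence $\mathbf t \in \mathrm L^1(\mathcal C)$.

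It remains to check the weight-decrease $\mathrm w_H(\mathbf y - \mathbf t) < \mathrm w_H(\mathbf y)$. Here $\mathbf y - \mathbf t = \mathbf y - \mathbf n_1 - \mathbf e_i - \mathbf n_2 = (\mathbf y - \mathbf n_1 - \mathbf e_i) + \mathbf n_2$. Since $\mathbf n_1 + \mathbf e_i \le \mathbf y$ with $i \in \mathrm{supp}(\mathbf y)$ and $\mathrm{supp}(\mathbf n_1) \subseteq \mathrm{supp}(\mathbf y)$, the word $\mathbf y - \mathbf n_1 - \mathbf e_i = \mathbf y + \mathbf n_1 + \mathbf e_i$ has support exactly $\mathrm{supp}(\mathbf y) \setminus \mathrm{supp}(\mathbf n_1 + \mathbf e_i)$, so $\mathrm w_H(\mathbf y - \mathbf n_1 - \mathbf e_i) = \mathrm w_H(\mathbf y) - \mathrm w_H(\mathbf n_1 + \mathbf e_i)$. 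Adding $\mathbf n_2$ contributes at most $\mathrm w_H(\mathbf n_2)$, so
$$\mathrm w_H(\mathbf y - \mathbf t) \le \mathrm w_H(\mathbf y) - \mathrm w_H(\mathbf n_1 + \mathbf e_i) + \mathrm w_H(\mathbf n_2) < \mathrm w_H(\mathbf y),$$
using $\mathrm w_H(\mathbf n_1 + \mathbf e_i) > \mathrm w_H(\mathbf n_2)$ in the last inequality. This establishes the test-set property.

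The main obstacle I anticipate is the existence argument in the second paragraph: one must guarantee that the "descent" from $\mathbf y$ can be arranged so that the word $\mathbf n_1$ obtained right before the last removed unit is genuinely a coset leader with $i \notin \mathrm{supp}(\mathbf n_1)$, \emph{and} that its sibling $N(\mathbf n_1 + \mathbf e_i)$ is strictly lighter (so that $\mathbf t \ne \mathbf 0$). Theorem \ref{Theorem1} handles the descent among coset leaders cleanly, but care is needed to make sure we stop at the first step where $\mathbf n_1 + \mathbf e_i$ leaves $\mathrm{CL}(\mathcal C)$ (equivalently, where $N(\mathbf n_1 + \mathbf e_i)$ becomes strictly lighter than $\mathbf n_1 + \mathbf e_i$); since $\mathbf y$ itself is not a coset leader while $N(\mathbf y)$ is, walking down the chain $\mathbf y \succ \mathbf y - \mathbf e_{i_1} \succ \cdots$ toward a coset leader of $\mathcal C + \mathbf y$ must cross that threshold exactly once, which pins down the right $i$ and $\mathbf n_1$.
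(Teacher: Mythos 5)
The heart of your proof --- take the ``threshold'' subword pair $\mathbf n_1\in\mathrm{CL}(\mathcal C)$, $\mathbf n_1+\mathbf e_i\le\mathbf y$ with $\mathbf n_1+\mathbf e_i\notin\mathrm{CL}(\mathcal C)$, set $\mathbf n_2=N(\mathbf n_1+\mathbf e_i)$, $\mathbf t=\mathbf n_1+\mathbf e_i+\mathbf n_2\in\mathrm L^1(\mathcal C)$, and count weights --- is exactly the paper's argument, and your final weight count is sound. The gap is in how you establish that such a pair $(\mathbf n_1,i)$ exists. Your first formulation, choosing $i$ so that $\mathbf n_1=\mathbf y+\mathbf e_i\in\mathrm{CL}(\mathcal C)$, is false in general: deleting one coordinate changes the weight of the coset leader of the ambient coset by at most one, so if the coset of $\mathbf y$ has leader weight much smaller than $\mathrm w_H(\mathbf y)$, no single deletion lands in $\mathrm{CL}(\mathcal C)$. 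The ``concrete'' repair via a closest codeword $\mathbf c$ also fails: for $i\in\mathrm{supp}(\mathbf y)\cap\mathrm{supp}(\mathbf c)$ one has $d_H(\mathbf y+\mathbf e_i,\mathbf c)=d_H(\mathbf y,\mathbf c)+1$ (you delete a coordinate where $\mathbf y$ and $\mathbf c$ agree), so the distance increases rather than decreases, and in any case this descent never addresses the property you actually need, namely that both $\mathbf n_1$ and $\mathbf n_1+\mathbf e_i$ are subwords of $\mathbf y$ with the first in $\mathrm{CL}(\mathcal C)$ and the second not. Finally, the chain in your closing paragraph is said to walk ``toward a coset leader of $\mathcal C+\mathbf y$'', but deleting a coordinate changes the coset at every step, so that anchor is not available. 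The correct and much simpler anchor is the one the paper uses: write $\mathrm{supp}(\mathbf y)=\{i_1,\dots,i_m\}$ and consider the prefixes $\mathbf 0,\ \mathbf e_{i_1},\ \mathbf e_{i_1}+\mathbf e_{i_2},\ \dots,\ \mathbf y$; the chain starts at $\mathbf 0\in\mathrm{CL}(\mathcal C)$ and ends at $\mathbf y\notin\mathrm{CL}(\mathcal C)$, so there is a first prefix leaving $\mathrm{CL}(\mathcal C)$, and its predecessor is your $\mathbf n_1$. No appeal to Theorem \ref{Theorem1} is needed for mere existence of the crossing (Theorem \ref{Theorem1} only tells you it happens exactly once).

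Apart from that step, your write-up tracks the paper's proof, with one small deviation that actually works in your favour: the paper inserts a ``without loss of generality'' reduction to the case $\mathrm{supp}(\mathbf n_1+\mathbf e_i)\cap\mathrm{supp}(\mathbf n_2)=\emptyset$ (using Theorem \ref{Theorem1} to adjust $\mathbf n_1,\mathbf n_2$) before comparing $|\mathrm{supp}(\mathbf t)\cap\mathrm{supp}(\mathbf y)|$ with $|\mathrm{supp}(\mathbf t)\cap\mathrm{supp}(\overline{\mathbf y})|$, whereas your estimate $\mathrm w_H(\mathbf y-\mathbf t)\le\mathrm w_H(\mathbf y)-\mathrm w_H(\mathbf n_1+\mathbf e_i)+\mathrm w_H(\mathbf n_2)<\mathrm w_H(\mathbf y)$ needs only $\mathrm{supp}(\mathbf n_1+\mathbf e_i)\subseteq\mathrm{supp}(\mathbf y)$ and the strict inequality $\mathrm w_H(\mathbf n_1+\mathbf e_i)>\mathrm w_H(\mathbf n_2)$, and it keeps $\mathbf n_2=N(\mathbf n_1+\mathbf e_i)\in\mathcal N$ untouched, so membership in $\mathrm L^1(\mathcal C)$ is immediate. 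Once the existence step is replaced by the prefix-chain argument above, your proof is complete and essentially coincides with the paper's.
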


\begin{proof}
Let us consider a word $\mathbf y \notin \mathrm{CL}(\mathcal C)$ with
$\mathrm{supp}(\mathbf y) = \left\{ i_1, \ldots, i_m\right\}\subseteq [1, n]$. Thus, there must exist an integer $1\leq l <m$ such that
$$\begin{array}{ccc}
\mathbf n_1:= \mathbf e_{i_1} + \ldots + \mathbf e_{i_l} \in \mathrm{CL}(\mathcal C) &
\hbox{ and }&
\mathbf n_1 + \mathbf e_{i_{l+1}}\notin \mathrm{CL}(\mathcal C).
\end{array}$$

We define $\mathbf n_2 = N\left(\mathbf n_1 + \mathbf e_{i_{l+1}}\right)$, i.e. $\mathbf n_2$ is the smallest element in the coset of $\mathbf n_1 + \mathbf e_{i_{l+1}}$ according to a fixed compatible weight ordering $\succ$. Since $\mathbf n_1 + \mathbf e_{i_{l+1}}\notin \mathrm{CL}(\mathcal C)$ we have that $\mathrm w_H(\mathbf n_2)< \mathrm w_H(\mathbf n_1 + \mathbf e_{i_{l+1}})$. Thus, $\mathbf t = \mathbf n_1 + \mathbf n_2 + \mathbf e_{i_{l+1}}\in \mathrm L^1(\mathcal C)$.

Without loss of generality we may assume that $\mathrm{supp}(\mathbf n_1 + \mathbf e_{i_{l+1}})\cap \mathrm{supp}(\mathbf n_2) = \emptyset$. Indeed,
\begin{itemize}
\item if $i_{l+1}\in \mathrm{supp}(\mathbf n_2)$ then, by Theorem \ref{Theorem1}, $\mathbf n_2 + \mathbf e_{i_{l+1}}\in \mathrm{CL}(\mathcal C)$. Moreover 
$$\begin{array}{ccc}
S(\mathbf n_2 + \mathbf e_{i_{l+1}}) = S(\mathbf n_1) & \hbox{ and }&
\mathrm w_H(\mathbf n_2 + \mathbf e_{i_{l+1}})<\mathrm w_H(\mathbf n_1)
\end{array},$$ which contradicts the fact that $\mathbf n_1 \in \mathrm{CL}(\mathcal C)$. 
\item Otherwise, if there exists $j\in \mathrm{supp}(\mathbf n_1)\cap \mathrm{supp}(\mathbf n_2)$. Then, we replace the elements $\mathbf n_1$ and $\mathbf n_2$ by
$\begin{array}{ccc}
\overline{\mathbf n_1} = \mathbf n_1 + \mathbf e_j  & \hbox{ and } & \overline{\mathbf n_2} = \mathbf n_2 + \mathbf e_j
\end{array}.$
Note that, by Theorem \ref{Theorem1}, $\overline{\mathbf n_1}, \overline{\mathbf n_2}\in \mathrm{CL}(\mathcal C)$. Furthermore, 
$$\begin{array}{ccc}
\mathrm w_H(\overline{\mathbf n_1} + \mathbf e_{i_{l+1}}) > \mathrm w_H(\overline{\mathbf n_2}) & \hbox{ and }&
S\left(\overline{\mathbf n_1} + \mathbf e_{i_{l+1}}\right) = S(\overline{\mathbf n_2}).
\end{array}$$ 
Thus we still have that $\mathbf t = \overline{\mathbf n_1} + \overline{\mathbf n_2} + \mathbf e_{i_{l+1}}\in \mathrm L^1 (\mathcal C)$.
\end{itemize}

Therefore,
$|\mathrm{supp}(\mathbf t)\cap \mathrm{supp}(\mathbf y)| \geq \mathrm{w}_H(\mathbf n_1 + \mathbf e_{i_{l+1}}) > \mathrm{w}_H(\mathbf n_2) 
\geq |\mathrm{supp}(\mathbf t)\cap \mathrm{supp}(\overline{\mathbf y})|$
where $\overline{\mathbf y}$ denotes the relative complement of $\mathbf y$ in $\mathbb F_2^n$, and in consequence,
$\mathrm{w}_H(\mathbf y -\mathbf t)<\mathrm{w}_H(\mathbf y)$ which completes the proof.
\end{proof}

Since $\mathrm L^1(\mathcal C)\subseteq \mathrm L(\mathcal C)$ and by Theorem \ref{Theorem5} the subset $\mathrm L^1(\mathcal C)$ forms a test-set for $\mathcal C$, then so does the set $\mathrm L(\mathcal C)$.
The following theorem gives a bound for the weight of a leader codeword of a given binary code $\mathcal C$. 

\begin{theorem}
\label{Theorem6}
Let $\mathbf c \in \mathrm L(\mathcal C)$ then $\mathrm{w}_H(\mathbf c) \leq 2 \rho(\mathcal C) +1$ where $\rho(\mathcal C)$ is the covering radius of $\mathcal C$.
\end{theorem}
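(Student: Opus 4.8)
The plan is to unwind the definition of a leader codeword and then apply the triangle inequality for the Hamming weight together with the bound on coset leader weights coming from the covering radius. Let $\mathbf c\in\mathrm L(\mathcal C)$. By Definition \ref{LeaderCodewords} we may write $\mathbf c=\mathbf n_1+\mathbf n_2+\mathbf e_i$ with $\mathbf n_1,\mathbf n_2\in\mathrm{CL}(\mathcal C)$ and $i\notin\mathrm{supp}(\mathbf n_1)$. The only facts about $\mathbf n_1$ and $\mathbf n_2$ that I would actually use are that they are coset leaders.

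Next I would recall from the Lemma preceding this section (equivalently, from the definition of the covering radius $\rho(\mathcal C)$ together with the fact that $\mathrm D(\mathbf 0)=\mathrm{CL}(\mathcal C)$) that every coset leader has Hamming weight at most $\rho(\mathcal C)$; in particular $\mathrm w_H(\mathbf n_1)\le\rho(\mathcal C)$ and $\mathrm w_H(\mathbf n_2)\le\rho(\mathcal C)$. Then, using $\mathrm{supp}(\mathbf a+\mathbf b)\subseteq\mathrm{supp}(\mathbf a)\cup\mathrm{supp}(\mathbf b)$ in $\mathbb F_2^n$, the weight is subadditive, so
$$\mathrm w_H(\mathbf c)=\mathrm w_H(\mathbf n_1+\mathbf n_2+\mathbf e_i)\le \mathrm w_H(\mathbf n_1)+\mathrm w_H(\mathbf n_2)+\mathrm w_H(\mathbf e_i)\le \rho(\mathcal C)+\rho(\mathcal C)+1=2\rho(\mathcal C)+1,$$
which is exactly the claimed inequality.

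I do not expect any real obstacle here: the argument is a one-line estimate once the definition of $\mathrm L(\mathcal C)$ is expanded and the covering-radius bound on coset-leader weights is invoked. If one wanted a slightly sharper statement, one could note that the extra unit $\mathbf e_i$ with $i\notin\mathrm{supp}(\mathbf n_1)$ does not automatically lie outside $\mathrm{supp}(\mathbf n_2)$, so the bound $2\rho(\mathcal C)+1$ is the most one can guarantee in general; this is consistent with the inequality $d(\mathcal C)\le 2\rho(\mathcal C)+1$ recalled just before the Lemma, since a minimum-weight codeword is in particular a leader codeword.
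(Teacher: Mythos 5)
Your proposal is correct and follows essentially the same route as the paper: decompose $\mathbf c=\mathbf n_1+\mathbf n_2+\mathbf e_i$ with $\mathbf n_1,\mathbf n_2\in\mathrm{CL}(\mathcal C)$, bound each coset leader's weight by $\rho(\mathcal C)$ via the covering radius, and conclude by subadditivity of the Hamming weight. The paper's proof is the same one-line estimate, merely stated more tersely.
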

\begin{proof}
Let $\mathbf c \in \mathrm L(\mathcal C)$ then there exists $\mathbf n_1,~\mathbf n_2 \in \mathrm{CL}(\mathcal C)$ and $i \notin \mathrm{supp}(\mathbf n_1)$ such that 
$\mathrm{w}_H(\mathbf n_1 + \mathbf e_i) > \mathrm{w}_H(\mathbf n_2)$ and $\mathbf c = \mathbf n_1 + \mathbf e_i + \mathbf n_2$. Applying the definition of covering radius we have that $\mathrm{w}_H(\mathbf n_1), \mathrm{w}_H(\mathbf n_2) \leq \rho$, thus $\mathrm{w}_H(\mathbf c) \leq 2\rho +1$.
\end{proof}

In Algorithm \ref{Algorithm::3} we describe a method to compute the subset $\mathrm{CL}(\mathbf y)$ of coset leaders corresponding to the coset $\mathcal C + \mathbf y$. 
Note that we first need to achieve the element $N(\mathbf y)$. We propose to use a Gradient Descent Decoding Algorithm (GDDA) for this purpose. This approach resembles those techniques presented in \cite{borges:2011}.

\begin{algorithm2e}[!h]
\KwData{A received vector $\mathbf y \in \mathbb F_2^n$ and the set of leader codewords $\mathrm L(\mathcal C)$ of a binary code $\mathcal C$.} 
\KwResult{The subset $\mathrm{CL}(\mathbf y)$ of coset leaders  corresponding to the coset $\mathcal C + \mathbf y$.}

\Begin({\textbf{: Computes $N(\mathbf y)$ by a GDDA\footnotemark[1]using $\mathrm L(\mathcal C)$ as a test-set for $\mathcal C$}\label{GDDA}}){
$N(\mathbf y) \longleftarrow \mathbf 0$\;
\While{there exist $\mathbf t \in \mathrm L(\mathcal C)$ such that $\mathrm w_H(\mathbf y + \mathbf t) < \mathbf w_H(\mathbf y)$}
{
	$\mathbf c \longleftarrow \mathbf c + \mathbf t$\;
	$\mathbf y \longleftarrow \mathbf y + \mathbf t$\;
}
}
$\mathbf y \longleftarrow N(\mathbf y)$; 
$\mathcal S \longleftarrow \{ \mathbf y \}$;
$L \longleftarrow \mathrm L(\mathcal C)$\;
\While{there exists $\mathbf c \in L~:~ \mathrm w_H(\mathbf y - \mathbf c) = \mathrm w_H(\mathbf y)$}
{
	$\mathbf y \longleftarrow \mathbf y - \mathbf c$;
	$\mathcal S \longleftarrow \mathcal S \cup \{ \mathbf y \}$\;
	$L \longleftarrow L - \{ \mathbf c\}$\;
} 
\caption{Computing the set $\mathrm{CL}(\mathbf y)$}
\label{Algorithm::3}
\end{algorithm2e}

\begin{theorem}
\label{Theorem7}
Algorithm \ref{Algorithm::3} computes, from the set $\mathrm L(\mathcal C)$, the subset $\mathrm{CL}(\mathbf y)$ of coset leaders  corresponding to the coset $\mathcal C + \mathbf y$ for a given received vector $\mathbf y \in \mathbb F_2^n$.
\end{theorem}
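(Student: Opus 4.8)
The plan is to analyse the two phases of Algorithm \ref{Algorithm::3} separately: the \textbf{Begin} block, which performs a gradient descent decoding using $\mathrm{L}(\mathcal{C})$ as a test set, and the final \texttt{while} loop, which collects the coset leaders one after another. First I would record that $\mathrm{L}(\mathcal{C})$ is indeed a test set for $\mathcal{C}$: this was already noted after Theorem \ref{Theorem5}, since $\mathrm{L}^1(\mathcal{C})\subseteq\mathrm{L}(\mathcal{C})$ and $\mathrm{L}^1(\mathcal{C})$ is a test set. Hence the first \texttt{while} loop strictly decreases $\mathrm{w}_H(\mathbf{y})$ at each iteration, so it terminates, and on exit $\mathbf{y}$ must lie in $\mathrm{D}(\mathbf{0})=\mathrm{CL}(\mathcal{C})$; since every step adds a codeword (leader codewords lie in $\mathcal{C}$ by Definition \ref{LeaderCodewords}), the coset is never altered, so the word produced is a genuine coset leader of the original coset $\mathcal{C}+\mathbf{y}$. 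This legitimizes starting the second phase from $N(\mathbf{y})$ with $\mathcal{S}=\{N(\mathbf{y})\}$.

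The structural key, which I would isolate as a lemma, is: \emph{if $\mathbf{n}$ and $\mathbf{n}'$ are two distinct coset leaders of one and the same coset, then $\mathbf{n}+\mathbf{n}'\in\mathrm{L}(\mathcal{C})$} (this presupposes the coset is not $\mathcal{C}$, where there are no two distinct leaders). To prove it, pick $i\in\mathrm{supp}(\mathbf{n})$, which is non-empty because $\mathbf{n}\neq\mathbf{0}$; by Theorem \ref{Theorem1} the word $\mathbf{n}+\mathbf{e}_i$ is again a coset leader and $i\notin\mathrm{supp}(\mathbf{n}+\mathbf{e}_i)$, so the identity $\mathbf{n}+\mathbf{n}'=(\mathbf{n}+\mathbf{e}_i)+\mathbf{n}'+\mathbf{e}_i$ exhibits $\mathbf{n}+\mathbf{n}'$ in exactly the shape of Definition \ref{LeaderCodewords} (it is a non-zero codeword since $\mathbf{n}\neq\mathbf{n}'$ and $\mathbf{n}+\mathbf{n}'\in\mathcal{C}$). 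Moreover $\mathrm{w}_H\big(\mathbf{n}'-(\mathbf{n}+\mathbf{n}')\big)=\mathrm{w}_H(\mathbf{n})=\mathrm{w}_H(\mathbf{n}')$, so subtracting the leader codeword $\mathbf{n}+\mathbf{n}'$ from the leader $\mathbf{n}'$ is precisely a weight-preserving step of the final loop that produces the leader $\mathbf{n}$.

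With the lemma available, soundness of the second phase ($\mathcal{S}\subseteq\mathrm{CL}(\mathbf{y})$) is routine: each element stored in $\mathcal{S}$ is obtained from $N(\mathbf{y})$ by successively adding codewords while keeping the Hamming weight equal to $\mathrm{w}_H(N(\mathbf{y}))$, hence has minimum weight in $\mathcal{C}+\mathbf{y}$. For completeness ($\mathrm{CL}(\mathbf{y})\subseteq\mathcal{S}$) the idea is that the loop cannot halt while a coset leader is still missing: if $\mathbf{y}$ is the current leader and $\mathbf{n}\in\mathrm{CL}(\mathbf{y})$ has not yet been produced, the lemma gives $\mathbf{c}=\mathbf{y}+\mathbf{n}\in\mathrm{L}(\mathcal{C})$ with $\mathrm{w}_H(\mathbf{y}-\mathbf{c})=\mathrm{w}_H(\mathbf{y})$, so a usable codeword exists and the loop continues, adding a genuinely new leader; as $\mathrm{CL}(\mathbf{y})$ is finite it gets exhausted. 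The step I expect to be the main obstacle is making this last argument airtight in the presence of the bookkeeping set $L$: one must check that deleting the codewords already used never removes the very $\mathbf{c}$ needed to reach the last unvisited leader, i.e. that "a suitable $\mathbf{c}$ still belongs to $L$" survives the deletions. Settling this requires a short combinatorial analysis of how the leader codewords pair up the coset leaders lying inside a fixed coset, and this is where most of the effort goes; termination, the soundness inclusion, and compatibility with the claimed output are then immediate.
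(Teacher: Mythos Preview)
Your approach is essentially the same as the paper's. The paper's proof also splits into two parts: it appeals to Theorem \ref{Theorem5} for the GDDA phase, and then proves the key fact that every $\mathbf z\in\mathrm{CL}(\mathbf y)$ satisfies $N(\mathbf y)+\mathbf z\in\mathrm L(\mathcal C)$, by writing $\mathbf z=\mathbf n_1+\mathbf e_i$ with $i\in\mathrm{supp}(\mathbf z)$ and invoking Theorem \ref{Theorem1}. Your lemma is the natural symmetric strengthening of this (any two leaders of the same coset, not just $N(\mathbf y)$ and $\mathbf z$), proved by the identical trick.

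Where you go further than the paper is in flagging the bookkeeping issue with the set $L$: the paper's argument only shows that every coset leader is \emph{one step away from $N(\mathbf y)$}, whereas the algorithm as written walks from leader to leader, updating $\mathbf y\leftarrow\mathbf y-\mathbf c$ and deleting $\mathbf c$ from $L$. The paper does not discuss why the deletions cannot strand the walk before all leaders are visited; it effectively treats the second loop as if it restarted from $N(\mathbf y)$ each time. So the concern you raise is legitimate and is simply not addressed in the paper's proof. Your more general lemma is the right tool to attack it (it shows the graph on $\mathrm{CL}(\mathbf y)$ with edges labelled by leader codewords is complete), but you still need the short combinatorial step you allude to, namely that distinct used edges carry distinct labels along the walk so that a fresh $\mathbf c\in L$ is always available. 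In short: same core idea as the paper, but you have spotted a gap the paper glosses over.
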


\begin{proof}
Let us first prove that every $\mathbf z \in \mathrm{CL}(\mathbf y)$ can be rewritten as $\mathbf z = N(\mathbf y) - \mathbf c$ with $\mathbf c \in \mathrm L(\mathcal C)$. Let $i \in \mathrm{supp}(\mathbf z)$ then $\mathbf z = \mathbf n_1 + \mathbf e_i$ with $i\notin \mathrm{supp}(\mathbf n_1)$. Hence, by Theorem \ref{Theorem1}, $\mathbf n_1 \in \mathrm{CL}(\mathcal C)$. 
Furthermore we have that
$$\begin{array}{ccc}
S(N(\mathbf y)) = S(\mathbf z) & \hbox{ and } & 
\mathrm w_H(\mathbf n_1) < \mathrm w_H(\mathbf z) = \mathrm w_H(N(\mathbf y)).
\end{array}$$
Thus, from the definition of leader codewords, $\mathbf c = N(\mathbf y) + (\mathbf n_1 + \mathbf e_i)\in \mathrm L(\mathcal C)\subseteq \mathcal C$, or equivalently, $\mathbf z = \mathbf n_1 + \mathbf e_i = \mathbf c - N(\mathbf y)$ with $\mathbf y \in L(\mathcal C)$.

The proof is completed by noting that Theorem \ref{Theorem5} guarantees \textbf{Step 1}.
\end{proof}

\subsection{Leader codewords and zero neighbours}
\footnotetext[1]{GDDA is the abbreviation for Gradient Descent Decoding Algorithm}
In this section we will give a brief review of basic concepts from \cite[Section 3]{barg:1998} and thus establish the relation between zero neighbours and leader codewords of a binary code $\mathcal C$.

\begin{definition}
For any subset $A \subset \mathbb F_2^n$ we define $\mathcal X(A)$ as the set of words at Hamming distance $1$ from $A$, i.e.
$$\mathcal X (A) = \left\{ \mathbf y \in \mathbb F_2^n \mid \min \left\{ d_H(\mathbf y, \mathbf a)~:~ \mathbf a \in A\right\} = 1\right\}.$$
We define the \emph{boundary} of $A$ as $\delta (A) = \mathcal X (A) \cup \mathcal X (\mathbb F_2^n \setminus A)$.
\end{definition}

\begin{definition}
A nonzero codeword $\mathbf c \in \mathcal C$ is called \emph{zero neighbour} if its Voronoi region shares a common boundary with the set of coset leaders, i.e. 
$$\delta (\mathrm D(\mathbf z)) \cap \delta (\mathrm D(\mathbf 0)) \neq \emptyset.$$
We will denote by $\mathcal Z(\mathcal C)$ the set of all zero neighbours of $\mathcal C$ that is to say:
{$$\mathcal Z(\mathcal C) = \left\{ \mathbf z \in \mathcal C \setminus \{ \mathbf 0\}
~:~ \delta (\mathrm D(\mathbf z)) \cap \delta (\mathrm D(\mathbf 0)) \neq \emptyset
\right\}.$$}
\end{definition}

Note that if $\mathbf z \in \mathcal C \setminus \{ \mathbf 0 \}$ satisfies that 
$\mathcal X (\mathrm D (\mathbf 0)) \cap \mathrm D(\mathbf z) \neq \emptyset$, then $\mathbf z \in \mathcal Z(\mathcal C)$. Furthermore $\mathcal Z(\mathcal C)$ is a test-set for $\mathcal C$ (see for instance \cite[Theorem 3.16]{barg:1998}). However the only property of the set $\mathcal Z(\mathcal C)$ that is essential for decoding is
$$\mathcal X(\mathrm D(\mathbf 0)) \subseteq \bigcup_{\mathbf z \in \mathcal Z(\mathcal C)} \mathrm D(\mathbf z).$$

Thus, if we restrict the set $\mathcal Z(\mathcal C)$ to a smallest subset verifying the previous property we still have a test-set for $\mathcal C$. We will denote such subset of $\mathcal Z(\mathcal C)$ by $\mathcal Z_{\min}(\mathcal C)$. Note that the set $\mathcal Z_{\min}(\mathcal C)$ may be not unique, however its size is well defined.

\begin{theorem}
\label{Theorem8}
Let $\mathcal C$ be a binary code and $\mathbf z \in \mathcal C \setminus \{\mathbf 0\}$. Then the following are equivalent:
\begin{enumerate}
\item $\mathcal X(\mathrm D(\mathbf 0)) \cap \mathrm D(\mathbf z) \neq \emptyset$.
\item $\mathbf z \in \mathrm L(\mathcal C)$.
\end{enumerate}
\end{theorem}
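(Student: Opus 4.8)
The plan is to prove the two implications separately, working directly with the combinatorial characterizations of the Voronoi region $\mathrm D(\mathbf 0) = \mathrm{CL}(\mathcal C)$ and the crossing set $\mathcal X$. First I would unravel what membership in $\mathcal X(\mathrm D(\mathbf 0)) \cap \mathrm D(\mathbf z)$ means: a word $\mathbf y$ in this intersection lies in $\mathrm D(\mathbf z)$, so $\mathbf y - \mathbf z$ is a coset leader of the coset $\mathcal C + \mathbf y$ (indeed $\mathrm D(\mathbf z)$ consists of those words whose closest codeword is $\mathbf z$, so $\mathbf y - \mathbf z \in \mathrm{CL}(\mathbf y)$), while simultaneously $\mathbf y$ is at Hamming distance exactly $1$ from some element of $\mathrm{CL}(\mathcal C)$, say $\mathbf y = \mathbf n_2 + \mathbf e_j$ with $\mathbf n_2 \in \mathrm{CL}(\mathcal C)$; moreover $\mathbf y \notin \mathrm{CL}(\mathcal C)$ since $\mathbf y \in \mathrm D(\mathbf z)$ with $\mathbf z \neq \mathbf 0$ (using that the Voronoi regions of distinct codewords can only overlap on their boundary, and a coset leader is closest to $\mathbf 0$).

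For the implication $(1) \Rightarrow (2)$: with $\mathbf y$ as above, set $\mathbf n_1 + \mathbf e_i := \mathbf y - \mathbf z = \mathbf y + \mathbf z$, where I peel off one coordinate $i \in \mathrm{supp}(\mathbf y - \mathbf z)$ so that $\mathbf n_1 \in \mathrm{CL}(\mathcal C)$ by Theorem~\ref{Theorem1}. Then $\mathbf z = (\mathbf n_1 + \mathbf e_i) + \mathbf y = (\mathbf n_1 + \mathbf e_i) + (\mathbf n_2 + \mathbf e_j)$. To land inside $\mathrm L(\mathcal C)$ as defined in Definition~\ref{LeaderCodewords}, I need to re-bracket this as $\mathbf n_1' + \mathbf n_2' + \mathbf e_k$ with $k \notin \mathrm{supp}(\mathbf n_1')$ and both $\mathbf n_1', \mathbf n_2' \in \mathrm{CL}(\mathcal C)$. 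The key weight inequality is $\mathrm w_H(\mathbf y - \mathbf z) = \mathrm w_H(\mathbf n_1 + \mathbf e_i) \le \mathrm w_H(\mathbf y) = \mathrm w_H(\mathbf n_2) + 1$, because $\mathbf y - \mathbf z$ is a coset leader of $\mathcal C + \mathbf y$ and $\mathbf y$ itself lies in that coset; this gives $\mathrm w_H(\mathbf n_1) \le \mathrm w_H(\mathbf n_2)$, which one can massage (by the same overlap/clean-up trick used in the proof of Theorem~\ref{Theorem5} to remove common support coordinates between $\mathbf n_1 + \mathbf e_i$ and $\mathbf n_2$) into the form required so that $\mathbf z \in \mathrm L(\mathcal C)$.

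For the converse $(2) \Rightarrow (1)$: given $\mathbf z = \mathbf n_1 + \mathbf n_2 + \mathbf e_i \in \mathrm L(\mathcal C)$ with $i \notin \mathrm{supp}(\mathbf n_1)$, $\mathbf n_1, \mathbf n_2 \in \mathrm{CL}(\mathcal C)$ and (using the rewriting in the Remark before Theorem~\ref{Theorem6}) $\mathrm w_H(\mathbf n_1 + \mathbf e_i) > \mathrm w_H(\mathbf n_2)$, I would exhibit an explicit witness $\mathbf y \in \mathcal X(\mathrm D(\mathbf 0)) \cap \mathrm D(\mathbf z)$. The natural candidate is $\mathbf y = \mathbf n_1 + \mathbf e_i$ itself, or a small modification after clearing common support between $\mathbf n_1 + \mathbf e_i$ and $\mathbf n_2$ exactly as in Theorem~\ref{Theorem5}, so that $\mathrm{supp}(\mathbf n_1 + \mathbf e_i) \cap \mathrm{supp}(\mathbf n_2) = \emptyset$. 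Then $\mathbf y - \mathbf z = \mathbf n_2 \in \mathrm{CL}(\mathcal C) = \mathrm{CL}(\mathbf y)$, and one checks $\mathbf n_2$ is a \emph{strictly} lighter representative of the coset than $\mathbf y$, which forces $\mathbf z$ to be the unique nearest codeword to $\mathbf y$ (any other codeword $\mathbf c'$ would give $\mathbf y - \mathbf c'$ of weight $\ge \mathrm w_H(\mathbf n_2) = \mathrm w_H(\mathbf y - \mathbf z)$, and the strictness rules out $\mathbf 0$), so $\mathbf y \in \mathrm D(\mathbf z)$; and $\mathbf y = \mathbf n_1 + \mathbf e_i$ with $\mathbf n_1 \in \mathrm{CL}(\mathcal C) = \mathrm D(\mathbf 0)$ and $i \notin \mathrm{supp}(\mathbf n_1)$ shows $d_H(\mathbf y, \mathbf n_1) = 1$, hence $\mathbf y \in \mathcal X(\mathrm D(\mathbf 0))$.

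The main obstacle I expect is the bookkeeping around the support-overlap clean-up: in both directions one must argue that replacing $\mathbf n_1, \mathbf n_2$ by $\mathbf n_1 + \mathbf e_j, \mathbf n_2 + \mathbf e_j$ for $j$ in the common support (and separately handling the case $i \in \mathrm{supp}(\mathbf n_2)$) preserves membership in $\mathrm{CL}(\mathcal C)$, preserves the codeword $\mathbf z$, and preserves the strict weight inequality — this is precisely the delicate part already carried out inside the proof of Theorem~\ref{Theorem5}, so I would invoke that reasoning rather than redo it, and the remaining verifications are routine weight and syndrome computations.
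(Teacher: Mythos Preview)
Your plan for $(2)\Rightarrow(1)$ matches the paper's proof almost exactly: take $\mathbf y=\mathbf n_1+\mathbf e_i$, use the reformulation of $\mathrm L(\mathcal C)$ with $\mathrm w_H(\mathbf n_1+\mathbf e_i)>\mathrm w_H(\mathbf n_2)$ to conclude $\mathbf y\notin\mathrm{CL}(\mathcal C)$, and then check $\mathbf y\in\mathcal X(\mathrm D(\mathbf 0))\cap\mathrm D(\mathbf z)$. No support clean-up is needed here; the paper simply invokes the rewritten form of $\mathrm L(\mathcal C)$.

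For $(1)\Rightarrow(2)$, however, you have made the argument harder than it is by swapping the roles of $\mathbf n_1$ and $\mathbf n_2$. The paper labels the witness $\mathbf y\in\mathcal X(\mathrm D(\mathbf 0))\cap\mathrm D(\mathbf z)$ as $\mathbf y=\mathbf n_1+\mathbf e_i$ with $\mathbf n_1\in\mathrm{CL}(\mathcal C)$ and $i\notin\mathrm{supp}(\mathbf n_1)$, and then sets $\mathbf n_2=\mathbf z-\mathbf y$, which is automatically a coset leader because $\mathbf y\in\mathrm D(\mathbf z)$. This yields $\mathbf z=\mathbf n_1+\mathbf n_2+\mathbf e_i$ in exactly the form of Definition~\ref{LeaderCodewords}, with no re-bracketing, no peeling of an extra $\mathbf e_j$, and no support clean-up. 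Your decomposition of $\mathbf y-\mathbf z$ into $\mathbf n_1+\mathbf e_i$ is superfluous: $\mathbf y-\mathbf z$ is already the coset leader you need for the $\mathbf n_2$ slot. The ``main obstacle'' you anticipate is therefore a phantom in this direction.

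One small correction: your justification that $\mathbf y\notin\mathrm{CL}(\mathcal C)$ because ``Voronoi regions of distinct codewords can only overlap on their boundary'' is not right---in this setting $\mathrm D(\mathbf 0)$ and $\mathrm D(\mathbf z)$ can genuinely overlap (any coset with several leaders gives overlapping regions). The correct reason is simply that $\mathbf y\in\mathcal X(\mathrm D(\mathbf 0))$ means $\min_{\mathbf a\in\mathrm D(\mathbf 0)}d_H(\mathbf y,\mathbf a)=1$, hence $\mathbf y\notin\mathrm D(\mathbf 0)$.
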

\begin{proof}
If $\mathcal X(\mathrm D(\mathbf 0)) \cap \mathrm D(\mathbf z) \neq \emptyset$ then there exists $\mathbf n_1 \in \mathrm D(\mathbf 0) = \mathrm{CL}(\mathcal C)$ and $i \notin \mathrm{supp}(\mathbf n_1)$ such that $\mathbf n_1 + \mathbf e_i \in \mathcal X(\mathrm D(\mathbf 0))$ and $\mathbf n_1 + \mathbf e_i \in \mathrm D(\mathbf z)$. In other words,
\begin{equation}
\label{Equation1}
\mathrm w_H(\mathbf z - (\mathbf n_1 + \mathbf e_i)) \leq \mathrm w_H(\mathbf c - (\mathbf n_1 + \mathbf e_i)) \hbox{ for all } \mathbf c \in \mathcal C\setminus \{ \mathbf z\},
\end{equation}
or equivalently, $\mathbf n_2 = \mathbf z - (\mathbf n_1 + \mathbf e_i) \in \mathrm{CL}(\mathcal C)$ with $\mathbf z \in \mathcal C$, thus 
$S(\mathbf n_2) = S(\mathbf n_1 + \mathbf e_i)$.
Furthermore, the special case of $\mathbf c = \mathbf 0 \in \mathcal C \setminus \{ \mathbf z \}$ of Equation \ref{Equation1} implies that $\mathrm w_H(\mathbf n_2)\leq \mathrm w_H(\mathbf n_1 + \mathbf e_i)$.
Therefore, all conditions in Definition \ref{LeaderCodewords} are verified, i.e. $\mathbf z = \mathbf n_1 + \mathbf n_2 + \mathbf e_i \in \mathrm L(\mathcal C)$.

Conversely, if $\mathbf z \in \mathrm L(\mathcal C)$, then $\mathbf z$ can be rewritten as $\mathbf z = \mathbf n_1 + \mathbf n_2 + \mathbf e_i$ where
$$\begin{array}{llccll}
(1) & \mathbf n_1, ~\mathbf n_2 \in \mathrm{CL}(\mathcal C). & ~~~ & ~~~ &
(3) & \mathrm w_H(\mathbf n_1 + \mathbf e_i) > \mathrm w_H(\mathbf n_2).\\
(2) & i \notin \mathrm{supp}(\mathbf n_1). & & &
(4) & S(\mathbf n_2) = S(\mathbf n_1 + \mathbf e_i).
\end{array}$$

Now $(1)$ and $(2)$ gives that $\mathbf n_1 + \mathbf e_i \in \mathcal X(\mathrm D(\mathbf 0))$, whereas $(1)$, $(3)$ and $(4)$ clearly force that $\mathrm{CL}(\mathbf n_1 + \mathbf e_i) = \mathbf n_2$, i.e. $\mathrm w_H(\mathbf n_2) \leq \mathrm w_H(\mathbf n_1 + \mathbf e_i + \mathbf c)$ for all $\mathbf c \in \mathcal C$, or equivalently, $\mathbf n_1 + \mathbf e_i \in \mathrm D(\mathbf z)$. Therefore, 
$\mathcal X (\mathrm D(\mathbf 0)) \cap \mathrm D(\mathbf z) \neq \emptyset$.
\end{proof}

\begin{corollary}
\label{Corollary1}
Let $\mathcal C$ be a binary code then $\mathcal Z_{\min}(\mathcal C) \subseteq \mathrm L(\mathcal C)$, for any minimal test-set $\mathcal Z_{\min}(\mathcal C)$ obtained from $\mathcal Z(\mathcal C)$.
\end{corollary}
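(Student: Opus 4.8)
The goal is to show that $\mathcal Z_{\min}(\mathcal C) \subseteq \mathrm L(\mathcal C)$. The plan is to derive this as a direct consequence of Theorem \ref{Theorem8}, which characterizes membership in $\mathrm L(\mathcal C)$ via the condition $\mathcal X(\mathrm D(\mathbf 0)) \cap \mathrm D(\mathbf z) \neq \emptyset$. The key observation is that every element of a minimal test-set $\mathcal Z_{\min}(\mathcal C)$ obtained from $\mathcal Z(\mathcal C)$ must satisfy exactly this condition, for otherwise it could be discarded without destroying the defining covering property of $\mathcal Z_{\min}(\mathcal C)$.

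In more detail, I would argue as follows. Recall that $\mathcal Z_{\min}(\mathcal C)$ is, by definition, a subset of $\mathcal Z(\mathcal C)$ that is minimal (with respect to inclusion, or equivalently cardinality) among all subsets $\mathcal W \subseteq \mathcal Z(\mathcal C)$ satisfying $\mathcal X(\mathrm D(\mathbf 0)) \subseteq \bigcup_{\mathbf z \in \mathcal W} \mathrm D(\mathbf z)$. Take any $\mathbf z \in \mathcal Z_{\min}(\mathcal C)$. By minimality, removing $\mathbf z$ breaks the covering, so there exists a word $\mathbf x \in \mathcal X(\mathrm D(\mathbf 0))$ with $\mathbf x \in \mathrm D(\mathbf z)$ but $\mathbf x \notin \mathrm D(\mathbf z')$ for every other $\mathbf z' \in \mathcal Z_{\min}(\mathcal C)$. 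In particular $\mathbf x \in \mathcal X(\mathrm D(\mathbf 0)) \cap \mathrm D(\mathbf z)$, so this intersection is nonempty. Applying Theorem \ref{Theorem8} (the implication from condition 1 to condition 2), we conclude $\mathbf z \in \mathrm L(\mathcal C)$. Since $\mathbf z$ was an arbitrary element of $\mathcal Z_{\min}(\mathcal C)$, this proves the containment.

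The main subtlety to handle carefully is the minimality argument itself: one must make sure that the notion of $\mathcal Z_{\min}(\mathcal C)$ used in the corollary is precisely ``a subset of $\mathcal Z(\mathcal C)$ that is minimal for the covering property $\mathcal X(\mathrm D(\mathbf 0)) \subseteq \bigcup \mathrm D(\mathbf z)$'', which is exactly how it was introduced in the text preceding Theorem \ref{Theorem8}. Given that, the extraction of the ``private'' witness $\mathbf x$ certifying that $\mathbf z$ cannot be removed is routine. No serious obstacle is expected here; the corollary is essentially a packaging of Theorem \ref{Theorem8} together with the definition of $\mathcal Z_{\min}(\mathcal C)$. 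The only thing worth a sentence is noting that $\mathcal Z_{\min}(\mathcal C)$ need not be unique, but the argument applies verbatim to any such minimal set, which is why the statement quantifies over ``any minimal test-set $\mathcal Z_{\min}(\mathcal C)$ obtained from $\mathcal Z(\mathcal C)$''.
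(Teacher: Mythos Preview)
Your proof is correct and follows essentially the same approach as the paper: both argue that every $\mathbf z \in \mathcal Z_{\min}(\mathcal C)$ satisfies $\mathcal X(\mathrm D(\mathbf 0)) \cap \mathrm D(\mathbf z) \neq \emptyset$ and then invoke Theorem~\ref{Theorem8}. In fact you give more detail than the paper, which simply asserts this intersection is nonempty without spelling out the minimality argument; your extraction of a witness $\mathbf x$ that would be uncovered upon removing $\mathbf z$ makes that step explicit.
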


\begin{proof}
Let $\mathcal Z_{\min}(\mathcal C)$ be a minimal test-set of $\mathcal C$ obtained from $\mathcal Z(\mathcal C)$, then every $\mathbf z\in \mathcal Z_{\min}(\mathcal C)$ satisfies that $\mathcal X(\mathrm D(\mathbf 0)) \cap \mathrm D(\mathbf z) \neq \emptyset$.
Thus, by Theorem \ref{Theorem8}, we obtained the required result.
\end{proof}

Algorithm \ref{Algorithm::2} gives the set of leader codewords $\mathrm L(\mathcal C)$ of a binary code $\mathcal C$. Furthermore, any minimal test-set $\mathcal Z_{\min}$ is a subset of $\mathrm L(\mathcal C)$. Thus, after performing redundancy elimination to $\mathrm L(\mathcal C)$, a minimal test-set $\mathcal Z_{\min}$ can also be obtained.

\begin{example}
We use the same code of Example \ref{Example::1}. Algorithm \ref{Algorithm::2} returns $\mathrm L(\mathcal C)$ and $\mathrm L^1(\mathcal C)$, in this case we obtained that both sets coincide. We describe below the set of leader codewords with $14$ elements of the given binary code $\mathcal C$.

$$\mathrm L(\mathcal C)= \mathrm L^1(\mathcal C) =\left\{
\begin{array}{c}
\begin{array}{cc}
\mathbf e_3+ \mathbf e_4+ \mathbf e_7+ \mathbf e_8, &
\mathbf e_2+ \mathbf e_4+ \mathbf e_6+ \mathbf e_8, \\
\mathbf e_2+ \mathbf e_3+ \mathbf e_6+ \mathbf e_7,&
\mathbf e_1+ \mathbf e_4+ \mathbf e_5+ \mathbf e_8,\\
\mathbf e_1+ \mathbf e_3+ \mathbf e_5+ \mathbf e_7,&
\mathbf e_1+ \mathbf e_2+ \mathbf e_5+ \mathbf e_6,\\
\mathbf e_4+ \mathbf e_6+ \mathbf e_7+ \mathbf e_9+ \mathbf e_{10},& 
\mathbf e_3+ \mathbf e_6+ \mathbf e_8+ \mathbf e_9+ \mathbf e_{10},\\
\mathbf e_2+ \mathbf e_7+ \mathbf e_8+ \mathbf e_9+ \mathbf e_{10},&
\mathbf e_2+ \mathbf e_3+ \mathbf e_4+ \mathbf e_9+ \mathbf e_{10}, \\
\end{array} \\
\begin{array}{c}
\mathbf e_1+ \mathbf e_5+ \mathbf e_6+ \mathbf e_7+ \mathbf e_8+ \mathbf e_9+ \mathbf e_{10},\\
\mathbf e_1+ \mathbf e_3+ \mathbf e_4+ \mathbf e_5+ \mathbf e_6+ \mathbf e_9+ \mathbf e_{10},\\
\mathbf e_1+ \mathbf e_2+ \mathbf e_4+ \mathbf e_5+ \mathbf e_7+ \mathbf e_9+ \mathbf e_{10},\\
\mathbf e_1+\mathbf e_2+ \mathbf e_3+ \mathbf e_5+ \mathbf e_8+ \mathbf e_9+ \mathbf e_{10}
\end{array}
\end{array}\right\}.$$

Note that the only nonzero codeword of $\mathcal C$ that is missing in $\mathrm L(\mathcal C)$ is the codeword $\mathbf y = (1, 1, 1, 1, 1, 1, 1, 1, 0, 0)$ of weight $8$. This result is consistent with the fact that the covering radius of $\mathcal C$ is $\rho(\mathcal C) =3$, as shown in Example \ref{Example::1}, and the statement of Theorem \ref{Theorem6}, where we proved that the weight of a leader codeword is always less or equal to $2\rho(\mathcal C) +1 =7$. 
$$\mathcal C = \left\{\begin{array}{c}
(0, 0, 0, 0, 0, 0, 0, 0, 0, 0), ~
(1, 0, 0, 0, 1, 1, 1, 1, 1, 1), ~
(0, 1, 0, 0, 0, 0, 1, 1, 1, 1), \\
(1, 1, 0, 0, 1, 1, 0, 0, 0, 0), ~
(0, 0, 1, 0, 0, 1, 0, 1, 1, 1), ~
(1, 0, 1, 0, 1, 0, 1, 0, 0, 0), \\
(0, 1, 1, 0, 0, 1, 1, 0, 0, 0), ~
(1, 1, 1, 0, 1, 0, 0, 1, 1, 1), ~
(0, 0, 0, 1, 0, 1, 1, 0, 1, 1), \\
(1, 0, 0, 1, 1, 0, 0, 1, 0, 0), ~
(0, 1, 0, 1, 0, 1, 0, 1, 0, 0), ~
(1, 1, 0, 1, 1, 0, 1, 0, 1, 1), \\
(0, 0, 1, 1, 0, 0, 1, 1, 0, 0), ~
(1, 0, 1, 1, 1, 1, 0, 0, 1, 1), ~
(0, 1, 1, 1, 0, 0, 0, 0, 1, 1), \\
\textbf{(1, 1, 1, 1, 1, 1, 1, 1, 0, 0)}
\end{array}\right\}$$
\end{example}

In the following table we present the computation results of a binary Golay code and a binary BCH code.

\begin{table}[h!]
\begin{tabular}{|l|c|c|}
\hline
& $[23,12]$ Golay code & $[21, 12]$ BCH code \\
\hline
\textbf{Codewords ($2^k$)} & $4096$ & $4096$ \\
\hline
\textbf{Cosets ($2^{n-k}$)} & $2048$ & $512$ \\
\hline
\textbf{Leader codewords ($|\mathrm L(\mathcal C)|$)} & $253$ & $549$\\
\hline
{$|\mathrm L^1(\mathcal C)|$} & $253$ & $470$\\
\hline
\end{tabular}
\caption{Number of codewords, number of cosets, number of leader codewords and the cardinality of {$|\mathrm L^1(\mathcal C)|$} of the $[23,12,7]$ binary Golay code and the $[21,12,5]$ binary BCH code.}
\end{table}

Therefore, we show an example where the subsets $\mathrm L(\mathcal C)$ and $\mathrm L^1(\mathcal C)$ agree (this is not a surprise since the Golay code is a perfect code) and an example where the set $\mathrm L^1(\mathcal C)$ is smaller than $\mathrm L(\mathcal C)$. Also, note that both codes have the same number of codewords but the Golay code has four times the number of cosets of the BCH code. On the other hand, the number of leader codewords is less in the Golay code. 

\begin{lemma}
If $\mathcal C$ is a perfect code, then $|\mathrm L(\mathcal C)| = |\mathrm L^1(\mathcal C)|$.
\end{lemma}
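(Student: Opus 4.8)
The plan is to show that for a perfect code the set $\mathcal N$ of chosen coset representatives coincides with the full set of coset leaders $\mathrm{CL}(\mathcal C)$; once this is established, the equality $\mathrm L(\mathcal C)=\mathrm L^1(\mathcal C)$, and in particular the equality of cardinalities, follows immediately by comparing the two definitions.

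First I would recall the characterization of perfect codes stated just after the lemma on $|\mathrm{CL}(\mathcal C)|$: a code $\mathcal C$ with more than one codeword is perfect if and only if $\rho(\mathcal C)=t$, where $t$ is the error-correcting capacity. Since every word of $\mathbb F_2^n$ lies within Hamming distance $\rho(\mathcal C)$ of some codeword, every coset of $\mathbb F_2^n/\mathcal C$ has weight at most $\rho(\mathcal C)=t$. But, as noted right after Definition \ref{CL-Definition}, every coset of weight at most $t$ has a \emph{unique} coset leader. Hence, for a perfect code, every coset contains exactly one coset leader.

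Next I would prove $\mathcal N=\mathrm{CL}(\mathcal C)$ by double inclusion. The inclusion $\mathcal N\subseteq\mathrm{CL}(\mathcal C)$ holds for any binary code: by the remark closing Section \ref{Section3}, each element of $\mathcal N$ is the smallest element of its coset with respect to the weight compatible ordering $\prec$, hence in particular a vector of minimal Hamming weight in that coset, i.e.\ a coset leader. Conversely, since $\mathcal N$ contains exactly one representative of each coset and, by the previous paragraph, each coset of a perfect code has a unique coset leader, that unique leader is forced to be the representative chosen in $\mathcal N$; thus $\mathrm{CL}(\mathcal C)\subseteq\mathcal N$. Therefore $\mathcal N=\mathrm{CL}(\mathcal C)$.

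Finally, using the rewritten description of $\mathrm L(\mathcal C)$ given in the remark following the definition of $\mathrm L^1(\mathcal C)$, the sets $\mathrm L(\mathcal C)$ and $\mathrm L^1(\mathcal C)$ are specified by the very same conditions on $\mathbf n_1$, $\mathbf n_2$ and $i$ (namely $i\notin\mathrm{supp}(\mathbf n_1)$, $\mathbf n_1\in\mathrm{CL}(\mathcal C)$, $\mathrm w_H(\mathbf n_1+\mathbf e_i)>\mathrm w_H(\mathbf n_2)$, and $\mathbf n_1+\mathbf n_2+\mathbf e_i\in\mathcal C\setminus\{\mathbf 0\}$), the only discrepancy being that $\mathrm L^1(\mathcal C)$ demands $\mathbf n_2\in\mathcal N$ while $\mathrm L(\mathcal C)$ only demands $\mathbf n_2\in\mathrm{CL}(\mathcal C)$. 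Since these two requirements coincide for a perfect code, we conclude $\mathrm L(\mathcal C)=\mathrm L^1(\mathcal C)$, whence $|\mathrm L(\mathcal C)|=|\mathrm L^1(\mathcal C)|$. There is no genuinely hard step; the only point needing a little care is the identity $\mathcal N=\mathrm{CL}(\mathcal C)$, which rests on the uniqueness of the coset leader in cosets of weight at most $t$ combined with the fact that $\rho(\mathcal C)=t$ for perfect codes.
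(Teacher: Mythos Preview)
Your proof is correct and follows essentially the same approach as the paper: both argue that in a perfect code every coset has a unique leader, so $\mathcal N=\mathrm{CL}(\mathcal C)$, whence the sole distinguishing condition between $\mathrm L^1(\mathcal C)$ and $\mathrm L(\mathcal C)$ collapses. Your version is simply more explicit, supplying the justification (via $\rho(\mathcal C)=t$ and the uniqueness of leaders in cosets of weight $\le t$) that the paper leaves implicit.
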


\begin{proof}
If $\mathcal C$ is a perfect code then every coset of $\mathcal C$ has a unique coset leader. That is, $\mathcal N = \mathrm{CL}(\mathcal C)$. Recall that the only difference between the sets $\mathrm L^1(\mathcal C)$ and $\mathrm L(\mathcal C)$ is that the component $\mathbf n_2$ of any element $\mathbf b = \mathbf n_1 + \mathbf n_2 + \mathbf e_i$ from $\mathrm L^1(\mathcal C)$ is required to belong to $\mathcal N \subseteq \mathrm{CL}(\mathcal C)$. But in this case this difference doesn't exists.
\end{proof}

\section{Implementations}
\label{s:con}

All the algorithms of this paper have been implemented and added to the collection of programs and procedures \textbf{GBLA\_LC} (\emph{Gr\"obner Basis by Linear Algebra and Linear Codes}).
This framework consist of various files written in the GAP \cite{GAP4} language and included in GAP's package GUAVA 3.10. Also during the Google Summer of code of $2013$ (\href{http://www.google-melange.com/gsoc/homepage/google/gsoc2013}{http://www.google-melange.com/gsoc/homepage/google/gsoc2013})  the student Ver\'onica Suaste (CIMAT, M\'exico) implemented Algorithm \ref{Algorithm::1} for inclusion in Sage \cite{sage}.  The code is published at \href{http://trac.sagemath.org/ticket/14973}{http://trac.sagemath.org/ticket/14973} and it will be included in next releases of Sage.

\section*{Aknowledgements} The authors gratefully acknowledge the helpful comments and suggestions of the editor  and the anonymous referees which contribute to a considerable improvement of this work.

\bibliographystyle{plain}      
\bibliography{CosetLeaders}

\begin{thebibliography}{10}

\bibitem{barg:1998}
A.~Barg.
\newblock Complexity issues in coding theory.
\newblock In {\em Handbook of coding theory, {V}ol. {I}, {II}}, pages 649--754.
  North-Holland, Amsterdam, 1998.

\bibitem{berlekamp:1978}
E.~R. Berlekamp, R.~J. McEliece, and Henk C.~A. van Tilborg.
\newblock On the inherent intractability of certain coding problems.
\newblock {\em IEEE Trans. Information Theory}, IT-24(3):384--386, 1978.

\bibitem{borges:2011}
M.~Borges-Quintana, M.~A. Borges-Trenard, I.~M\'arquez-Corbella, and
  E.~Mart\'inez-Moro.
\newblock An algebraic view to gradient descent decoding.
\newblock In {\em Information Theory Workshop (ITW)}, pages 1--4. IEEE, 2010.

\bibitem{borges:2007b}
M.~Borges-Quintana, M.~A. Borges-Trenard, and E.~Mart{\'{\i}}nez-Moro.
\newblock A {G}r\"obner representation for linear codes.
\newblock In {\em Advances in coding theory and cryptography}, volume~3 of {\em
  Ser. Coding Theory Cryptol.}, pages 17--32. World Sci. Publ., Hackensack, NJ,
  2007.

\bibitem{borges:2007a}
M.~Borges-Quintana, M.~A. Borges-Trenard, and E.~Mart{\'{\i}}nez-Moro.
\newblock On a {G}r\"obner bases structure associated to linear codes.
\newblock {\em J. Discrete Math. Sci. Cryptogr.}, 10(2):151--191, 2007.

\bibitem{bruck:1990}
J.~Bruck and M.~Naor.
\newblock The hardness of decoding linear codes with preprocessing.
\newblock {\em IEEE Trans. Inform. Theory}, 36(2):381--385, 1990.

\bibitem{GAP4}
The GAP~Group.
\newblock {\em GAP -- Groups, Algorithms, and Programming, Version 4.12}, 2009.

\bibitem{helleseth:2005}
T.~Helleseth, T.~Klove, and V.I. Levenshtein.
\newblock Error-correction capability of binary linear codes.
\newblock {\em Information Theory, IEEE Transactions on}, 51(4):1408 -- 1423,
  april 2005.

\bibitem{huffman:2003}
W.~Cary Huffman and Vera Pless.
\newblock {\em Fundamentals of error-correcting codes}.
\newblock Cambridge University Press, Cambridge, 2003.

\bibitem{kurshan:1972}
R.P. Kurshan and N.J.A. Sloane.
\newblock Coset analysis of reed muller codes via translates of finite vector
  spaces.
\newblock {\em Information and Control}, 20(5):410 -- 414, 1972.

\bibitem{macwilliams:1977}
F.~J. MacWilliams and N.~J.~A. Sloane.
\newblock {\em {The Theory of Error-Correcting Codes}}.
\newblock Elsevier/North Holland, Amsterdam, 1977.

\bibitem{marquez:2011}
I.~M\'arquez-Corbella and E.~Mart\'inez-Moro.
\newblock Algebraic structure of the minimal support codewords set of some
  linear codes.
\newblock {\em Adv. Math. Commun.}, 5(2):233--244, 2011.

\bibitem{massey1993}
J.~L. Massey.
\newblock Minimal codewords and secret sharing.
\newblock In {\em Proceedings of the 6th Joint Swedish-Russian International
  Workshop on Information Theory}, pages 246--249. 1993.

\bibitem{raddum:2004}
Havard Raddum.
\newblock On the computation of coset leaders with high {H}amming weight.
\newblock {\em Discrete Mathematics}, 274(1-3):213 -- 231, 2004.

\bibitem{sage}
William Stein.
\newblock {\em {Sage}: {O}pen {S}ource {M}athematical {S}oftware}.
\newblock The Sage~Group, 2008.
\newblock {\tt http://www.sagemath.org}.

\bibitem{yasunaga:2010}
K.~Yasunaga and T.~Fujiwara.
\newblock On correctable errors of binary linear codes.
\newblock {\em Information Theory, IEEE Transactions on}, 56(6):2537 --2548,
  june 2010.

\end{thebibliography}

\end{document}